\newcommand{\edp}{{\mathcal{P}_{\e,\delta}}}  
\newcommand{\tlp}{{\mathcal{P}_{\text{TLap}}}}  
\newcommand{\ft}{{f_{\text{TLap}}}}  
\newcommand{\opta}{{V_1^*}}    
\newcommand{\optb}{{V_2^*}}    
\newcommand{\lowa}{{V_1^{low}}}    
\newcommand{\lowb}{{V_2^{low}}}    
\newcommand{\uppa}{{V_1^{upp}}}    
\newcommand{\uppb}{{V_2^{upp}}}    
\newcommand{\database}{{\mathcal{D}}}    
\newcommand{\KM}{{\mathcal{K}}}    
\newcommand{\e}{{\epsilon}}    
\newcommand{\loss}{{\mathcal{L}}}    
\newcommand{\D}{{\Delta}}    
\newcommand{\dD}{{\tilde{\Delta}}}    
\newcommand{\p}{{\mathcal{P}}}  
\newcommand{\dpp}{{\tilde{\mathcal{P}}_N}}  
\newcommand{\dll}{{\tilde{\mathcal{L}}_N}}  
\newcommand{\R}{{\mathbb{R}}}  
\newcommand{\Z}{{\mathbb{Z}}}  
\newtheorem{theorem}{Theorem}
\newtheorem{lemma}{Lemma}
\newtheorem{definition}{Definition}
\begin{document}
\title{ Privacy and Utility Tradeoff in Approximate Differential Privacy}
\date{}
\author{
  Quan Geng, Wei Ding, Ruiqi Guo, and Sanjiv Kumar\\
  \\
 Google AI \\
New York, NY 10011 \\
Email: {qgeng, vvei, guorq, sanjivk}@google.com
}
\maketitle
\begin{abstract}
We characterize the minimum noise amplitude and power for noise-adding mechanisms in $(\epsilon, \delta)$-differential privacy for single real-valued query function. We derive new lower bounds using the duality of linear programming, and new upper bounds by proposing a new class of $(\epsilon,\delta)$-differentially private mechanisms, the \emph{truncated Laplacian} mechanisms. We show that the multiplicative gap of the lower bounds and upper bounds goes to zero in various high privacy regimes, proving the tightness of the lower and upper bounds and thus establishing the optimality of the truncated Laplacian mechanism. In particular, our results close the previous constant multiplicative gap in the discrete setting. Numeric experiments show the improvement of the truncated Laplacian mechanism over the optimal Gaussian mechanism in all privacy regimes.
\end{abstract}

\section{Introduction} \label{sec:intro}
Differential privacy, introduced by \citet{DMNS06}, is a framework to quantify to what extent individual privacy in a statistical dataset is preserved while releasing useful aggregate information about the dataset.
Differential privacy provides strong privacy guarantees by requiring the near-indistinguishability of whether an individual is in the dataset or not based on the released information. 
For more motivation and background of differential privacy, we refer the readers to the survey by \citet{DPsurvey} and the book by \citet{DPbook}.

Since its introduction, differential privacy has spawned a large body of research in differentially private data-releasing mechanism design, and the noise-adding mechanism has been applied in many machine learning algorithms to preserve differential privacy, e.g., 
logistic regression \citep{CM08}, 
empirical risk minimization \citep{ERM,ERM2}, 
online learning \citep{Jain12}, 
statistical risk minimization \citep{Duchi12},
statistical learning \citep{PAC}, 
deep learning \citep{Shokri15, Abadi2016, Phan2016}
distributed optimization \citep{Agarwal18}, 
hypothesis testing \citep{HT18},
matrix completion \citep{JainMC},
expectation maximization \citep{EM},
and principal component analysis \citep{PCA, PCA2}.

The classic differential privacy is called $\e$-differential privacy, which imposes an upper bound $e^\e$ on the multiplicative distance of the probability distributions of the randomized query outputs for any two neighboring datasets. The standard approach for preserving $\e$-differential privacy is adding a noise with the Laplacian distribution to the query output.
Introduced by \citet{DKMMN06}, the approximate differential privacy is $(\e,\delta)$-differential privacy, and the common interpretation of $(\e,\delta)$-differential privacy is that it is $\e$-differential privacy ``except with probability $\delta$'' \citep{RenyiDP}.
The standard approach for preserving $(\e,\delta)$-differential privacy is the Gaussian mechanism, which adds a Gaussian noise to the query output.

To fully make use of the differentially private mechanisms, it is important to understand the fundamental trade-off between privacy and utility (accuracy). For example, within the class of noise-adding mechanisms, given the privacy constraint $\e$ and $\delta$, we are interested in deriving the minimum amount of noise added to achieve the highest accuracy and utility while preserving the differential privacy. In the literature, there have been many works on optimal differential privacy mechanism design and characterizing the privacy and utility tradeoff in differential privacy. For a single count query function under $\epsilon$-differential privacy, \citet{Ghosh09} show that the geometric mechanism is universally optimal under a Bayesian framework, and \citet{minimax10} derived the optimal noise probability distributions under a minimax cost framework.
\citet{GV_IT_Epsilon} show that the optimal noise distribution has a staircase-shaped probability density function for single real-valued query function under $\e$-differential privacy, and \citet{GV_2_Dimension} generalized the result to two-dimensional query functions. \citet{DomingoFerrer2013} also independently derived the staircase-shaped noise probability distribution under a different optimization framework.

\citet{GV_IT_Approximate} show that for a single integer-valued query function under $(\epsilon, \delta)$-differential privacy, the discrete uniform noise distribution and the discrete Laplacian noise distribution are asymptotically optimal within a constant multiplicative gap in the high privacy regions.
\citet{icmlGaussian} improved the classic analysis of the Gaussian mechanism for $(\e,\delta)$-differential in the high privacy regime ($\e \to 0$), and developed an optimal Gaussian mechanism whose variance is calibrated directly using the Gaussian cumulative density function instead of a tail bound approximation.

\subsection{Our Contributions}

In this work, we characterize the minimum noise amplitude and power for noise-adding mechanisms in $(\e, \delta)$-differential privacy for single real-valued query function. Our contributions are three-fold:

First, we present a new class of $(\e, \delta)$-differentially private noise-adding mechanisms, \emph{truncated Laplacian} mechanisms. Applying the truncated Laplacian mechanism, we derive new achievable upper bounds on minimum noise amplitude and noise power in $(\e,\delta)$-differential privacy for single real-valued query function. The key insights from the new mechanisms design are that the noise probability density function shall decay as fast as possible while being $\e$-differentially private when the noise is small, and then sharply reduce to zero when the noise is big, to avoid a heavy tail distribution which would incur a high cost.

Second, we derive new lower bounds on the minimum noise amplitude and minimum noise power. The key technique is to discretize the continuous probability distribution and the loss function, and transform the continuous functional optimization problem to linear programming. Applying the lower bound result in \citet{GV_IT_Approximate} for \emph{integer-valued} query function, which is based on the duality of linear programming, we derive new lower bounds for \emph{real-valued} query functions under $(\e, \delta)$-differential privacy.

Third, we show that the multiplicative gap of the lower bounds and upper bounds goes to zero in various high privacy regimes, proving the tightness of the lower and upper bounds, and thus establish the optimality of the truncated Laplacian mechanism for minimizing the noise amplitude and noise power under $(\e, \delta)$-differential privacy. In particular, our result closes the previous constant multiplicative gap between the lower bound and the upper bound (using discrete uniform distribution and discrete Laplacian distribution) in \citet{GV_IT_Approximate}.

Comprehensive numeric experiments show the improvement of the truncated Laplacian mechanism over the optimal Gaussian mechanism in \citet{icmlGaussian} by significantly reducing the noise amplitude and noise power in all privacy regimes.

\subsection{Organization}

The paper is organized as follows.
In Section~\ref{sec:formulation}, we give some preliminaries on differential privacy, and derive the $(\e,\delta)$-differential privacy constraint on the additive noise probability distribution and define the minimum noise amplitude and noise power under $(\e,\delta)$-differential privacy. 
Section~\ref{sec:upper_bound} presents the truncated Laplacian mechanism for preserving $(\e, \delta)$-differential privacy, and derives new upper bounds for minimum noise amplitude and noise power.
Section~\ref{sec:lower_bound} derives new lower bounds on the minimum noise magnitude and noise power.
Section~\ref{sec:tight} shows that the multiplicative gap between the lower bounds and the upper bounds goes to zero in various privacy regimes, and thus proves the tightness of the new lower and upper bounds.
Section~\ref{sec:comparison} conducts comprehensive numeric experiments to compare the performance of the truncated Laplacian mechanism with the optimal Gaussian mechanisms, and demonstrates the improvement in all privacy regimes.
Section~\ref{sec:conclusion} discusses some additional properties of the truncated Laplacian mechanism and concludes this paper.

\section{Problem Formulation} \label{sec:formulation}
 In this section, we first give some preliminaries on differential privacy, and then define the minimum noise amplitude $\opta$ and minimum noise power $\optb$ for $(\e,\delta)$-differentially private noise-adding mechanisms. 

Consider a real-valued query function $q: \database \rightarrow \R$,
where $\database$ is the set of all possible datasets. The real-valued query function $q$ will be applied to a dataset, and the query output is a real number. Two datasets $D_1, D_2 \in \database$ are called neighboring datasets if they differ in at most one element, i.e.,  one is a proper subset of the other and the larger dataset contains just one additional element \cite{DPsurvey}. A randomized query-answering mechanism $\KM$ for the query function $q$ will randomly output a number with probability distribution depending on query output $q(D)$, where $D$ is the dataset.

\begin{definition}[$(\e,\delta)$-differential privacy \citep{DKMMN06}]
A randomized mechanism $\KM$ gives $(\e, \delta)$-differential privacy if for all data sets $D_1$ and $D_2$ differing on at most one element, and for any measurable set $S \subset \text{Range}(\KM)$,
\begin{align}
    \text{Pr}[\KM(D_1) \in S] \le e^\e \;  \text{Pr}[\KM(D_2) \in S] + \delta. \label{eq:dp_eps_delta_constraint} 
 \end{align}
\end{definition}

The sensitivity of a real-valued query function measures how the query changes for neighboring datasets.
\begin{definition}[Query Sensitivity]
The sensitivity of $q$ is defined as
\begin{align*}
    \D := \max_{D_1,D_2 \in \database} | q(D_1) - q(D_2)|,
\end{align*}
for all $D_1,D_2$ differing in at most one element.
\end{definition}

A standard approach for preserving differential privacy is query-output independent noise-adding mechanisms, where a random noise is added to the query output.  Given a dataset $D$, a query-output independent noise-adding mechanism $\KM$ will release the query output $t = q(D)$ corrupted by an additive random noise $X$ with probability distribution $\p$:
\begin{align*}
    \KM(D) = t + X.
\end{align*}

We derive the differential privacy constraint on the noise probability distribution $\p$ in Lemma \ref{lem:dp_constraint}.
\begin{lemma}\label{lem:dp_constraint}{}
Given the query sensitivity $\D$ and privacy parameters $\e$ and $\delta$, the noise probability distribution $\p$ preserves $(\e,\delta)$-differential privacy if and only if
\begin{align}
  \p (S) - e^\e \p(S + d) \le \delta, \forall \; |d| \le \D, \text{measurable set} \; S \subset \R.  \label{eq:mainconstraint}
\end{align} 
\end{lemma}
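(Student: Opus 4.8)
The plan is to translate the abstract $(\e,\delta)$-differential privacy condition \eqref{eq:dp_eps_delta_constraint}, which quantifies over datasets, into an equivalent condition stated purely on the noise distribution $\p$, which quantifies over shifts $d$ and measurable sets. The crucial structural fact I would exploit is that for a query-output independent noise-adding mechanism $\KM(D) = q(D) + X$ with $X \sim \p$, the output distribution is merely a translate of $\p$: writing $t_i = q(D_i)$ and $S - t := \{s - t : s \in S\}$, a change of variables gives $\text{Pr}[\KM(D_i) \in S] = \text{Pr}[X \in S - t_i] = \p(S - t_i)$ for every measurable $S$. The neighboring condition on $D_1, D_2$ then enters only through the scalar $d := t_1 - t_2$, which by the definition of sensitivity satisfies $|d| \le \D$. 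Establishing this reparametrization reduces the whole lemma to matching two inequalities term by term.

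For the ``if'' direction I would assume \eqref{eq:mainconstraint} and verify \eqref{eq:dp_eps_delta_constraint}. Fix neighboring $D_1, D_2$ and a measurable $S$, and set $d = t_1 - t_2$ and $S_0 = S - t_1$, so that $S - t_2 = S_0 + d$. Substituting the two translated probabilities, the privacy inequality $\p(S - t_1) \le e^\e \p(S - t_2) + \delta$ becomes exactly \eqref{eq:mainconstraint} evaluated at this particular $S_0$ and this $d$; since $|d| \le \D$, the assumed constraint applies and the conclusion follows directly.

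The converse requires more care, and it is where I expect the only real subtlety. Assuming $\KM$ is $(\e,\delta)$-differentially private, I want \eqref{eq:mainconstraint} for \emph{every} shift with $|d| \le \D$ and every measurable $S$. Running the same change of variables backwards turns \eqref{eq:dp_eps_delta_constraint} into $\p(S_0) - e^\e \p(S_0 + d) \le \delta$ for precisely those $d$ realizable as $t_1 - t_2$ over neighboring datasets. The obstacle is obtaining the inequality for the full range $|d| \le \D$ rather than only the extremal values $d = \pm\D$; I would resolve this by invoking the worst-case design requirement, namely that the noise distribution must preserve privacy against any query function of sensitivity at most $\D$, so that every $d \in [-\D, \D]$ arises as some $q(D_1) - q(D_2)$. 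Note that both signs of $d$ are genuinely needed, since the constraints for $d$ and $-d$ are not equivalent; this is automatic because the neighboring relation is symmetric and the roles of $D_1$ and $D_2$ may be swapped. Finally, since \eqref{eq:mainconstraint} must hold simultaneously over all measurable $S$, I would observe that the translation $S \mapsto S - t_1$ is a measurable bijection of $\R$, so ranging $S$ over all measurable sets is equivalent to ranging $S_0$ over all measurable sets, which closes the equivalence.
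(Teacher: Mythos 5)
Your proposal is correct and takes essentially the same approach as the paper: both arguments reduce \eqref{eq:dp_eps_delta_constraint} to \eqref{eq:mainconstraint} by the translation-invariance of the noise-adding mechanism (the change of variables $d = t_1 - t_2$, $S_0 = S - t_1$), and both justify covering the full range $|d| \le \D$ via the worst-case interpretation of sensitivity --- the paper handles this in a footnote stating that any $t_1, t_2$ with $|t_1 - t_2| \le \D$ may arise as query outputs of neighboring datasets, which is exactly your ``worst-case design requirement.'' Your write-up is merely more explicit than the paper's about the converse direction (realizability of every shift $d$, both signs of $d$, and the bijectivity of translation), but the substance is identical.
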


\begin{proof}
The differential privacy constraint \eqref{eq:dp_eps_delta_constraint} on $\KM$ is that for any $t_1,t_2 \in \R$ such that $|t_1 - t_2| \le \D $ (corresponding to the query outputs for two neighboring datasets\footnote{In this work we impose no prior on the query function other than the query sensitivity $\D$. For any $t_1,t_2 \in \R$ such that $|t_1 - t_2| \le \D$, there may exist two neighboring datasets $D_1$ and $D_2$ with $q(D_1) = t_1$ and $q(D_2) = t_2$.}),
\begin{align}
    \p(S - t_1) \le e^\e \p(S - t_2) + \delta, \forall \; \text{measurable set} \; S \subset \R, \label{eq:tmp1}
\end{align}
where $\forall t \in \R$, $S+t$ is defined as the set $\{s+t \, | \, s \in S\}$.

Since \eqref{eq:tmp1} has to hold for any measurable set $S$ and any $|t_1 - t_2| \le \D$, equivalently, we have
\begin{align*}
    \p (S) \le e^\e \p(S + d) + \delta, \forall \; |d| \le \D, \text{measurable set} \; S \subset \R.
\end{align*}

\end{proof}

Let $\edp$ denote the set of noise probability distributions satisfying the $(\e, \delta)$-differential privacy constraint~\eqref{eq:mainconstraint}. Given $\p \in \edp$, the expected noise amplitude and noise power are $\int_{x \in \R} |x| \p(dx)$  and $\int_{x \in \R} x^2 \p(dx)$. The goal of this work is to characterize the minimum expected noise amplitude and noise power under $(\e,\delta)$-differential privacy. More precisely, define 
\begin{align*}
 & \opta&:= \inf_{\p \in \edp} \int_{x \in \R}  |x|  \p(dx) & \;\;\; (\text{min noise amplitude}), \\
 & \optb&:= \inf_{\p \in \edp} \int_{x \in \R}  x^2  \p(dx) & \;\;\; (\text{min noise power}).
\end{align*}

In this work, we characterize $\opta$ and $\optb$ in terms of $\D, \e, \delta$ by deriving tight lower bounds $\lowa, \lowb$ and upper bounds $\uppa, \uppb$ such that
$\lowa \le \opta \le \uppa$ and $\lowb  \le \optb \le \uppb.$

In the next section, we present the new upper bounds $\uppa$ and $\uppb$. The lower bounds $\lowa$ and $\lowb$ are presented in Section \ref{sec:lower_bound}.

\section{Upper Bound: Truncated Laplacian Mechanism} \label{sec:upper_bound}
In this section, we present a new class of $(\e, \delta)$-differentially private noise-adding mechanism, \emph{truncated Laplacian} mechanism. Applying the truncated Laplacian mechanism, we derive new achievable (and tight) upper bounds $\uppa$ and $\uppb$ on minimum noise amplitude $\opta$ and minimum noise power $\optb$ in Theorem \ref{thm:noise-amplitude} and Theorem \ref{thm:noise-power}.

Before presenting the exact form of the truncated Laplacian mechanism, we first discuss some key ideas and insights behind the new mechanism design. 

The standard Laplacian distribution for preserving $\e$-differential privacy has a symmetric probability density function $f(x) = \frac{\e}{2\D} e^{-\frac{|x|\e}{\D}}$. Note that for any $x \ge 0$, the probability density decay rate, $\frac{f(x)}{f(x+\D)}$, is exactly $e^{\e}$. \citet{GV_IT_Epsilon} show that the decay rate $e^{\e}$ is optimal under $\e$-differential privacy. Indeed, if the decay rate is higher, it is no longer $\e$-differentially private; if the decay rate is lower, it will incur a higher cost. However, under $(\e, \delta)$-differential privacy, Laplacian distribution is not optimal as it has a heavy tail distribution. 

$(\e, \delta)$-differential privacy relaxes the $\e$-differential privacy constraint, and it allows that for a set of points with a probability mass $\delta$, the decay rate can exceed $e^{\e}$. The Gaussian mechanism is widely used in $(\e, \delta)$-differential privacy, and for $x>0$, its probability density decay rate is $\frac{f(x)}{f(x+\D)} = \frac{e^{-\frac{x^2}{\sigma^2}}}{e^{-\frac{(x+\D)^2}{\sigma^2}}} = e^{\frac{\D^2 + 2\D x}{\sigma^2}} = e^{\frac{\D^2}{\sigma^2}} e^{\frac{2\D}{\sigma^2}x} $, which is exponentially increasing with respect to $x$. When $x$ is big, the decay rate can be very high. While the Gaussian mechanism addresses the long tail distribution to some extent by having higher decay rate for large $x$, the decay rate is smaller than $e^{\e}$ when $x$ is small.

Motivated by the observation that under $(\e, \delta)$-differential privacy, the decay rate shall be as high as possible without exceeding $e^{\e}$, except for a set of points with a probability mass $\delta$ (for those there is no limit on the decay rate), we derive a symmetric truncated Laplacian distribution where the probability density decay rate is exactly $e^{\e}$, except for a set of points with probability mass $\delta$ where the decay rate is infinite.
\begin{definition}[Truncated Laplacian Distribution]
Given the privacy parameters $0< \delta < \frac{1}{2}, \e > 0$ and the query sensitivity $\D > 0$, the probability density function of the truncated Laplacian distribution $\tlp$ is defined as:
\begin{align}
\ft(x) := 
    \begin{cases}
      B e^{-\frac{|x|}{\lambda}}, & \text{for } x \in [-A, A] \\
      0, & \text{otherwise}
    \end{cases} \label{eq:noise-pdf}
\end{align}

where
\begin{align*}
\lambda &:= \frac{\D}{\e}, \\
A &:= \frac{\D}{\e} \log(1 + \frac{e^\e - 1}{2\delta}), \\
\\
B &:= \frac{1}{2\lambda (1 - e^{-\frac{A}{\lambda}})} = \frac{1}{2\frac{\D}{\e} (1 - \frac{1}{ 1 + \frac{e^\e - 1}{2\delta}})}.
\end{align*}
\end{definition}

\begin{figure}[H]
\centering
\includegraphics[width=0.45\textwidth]{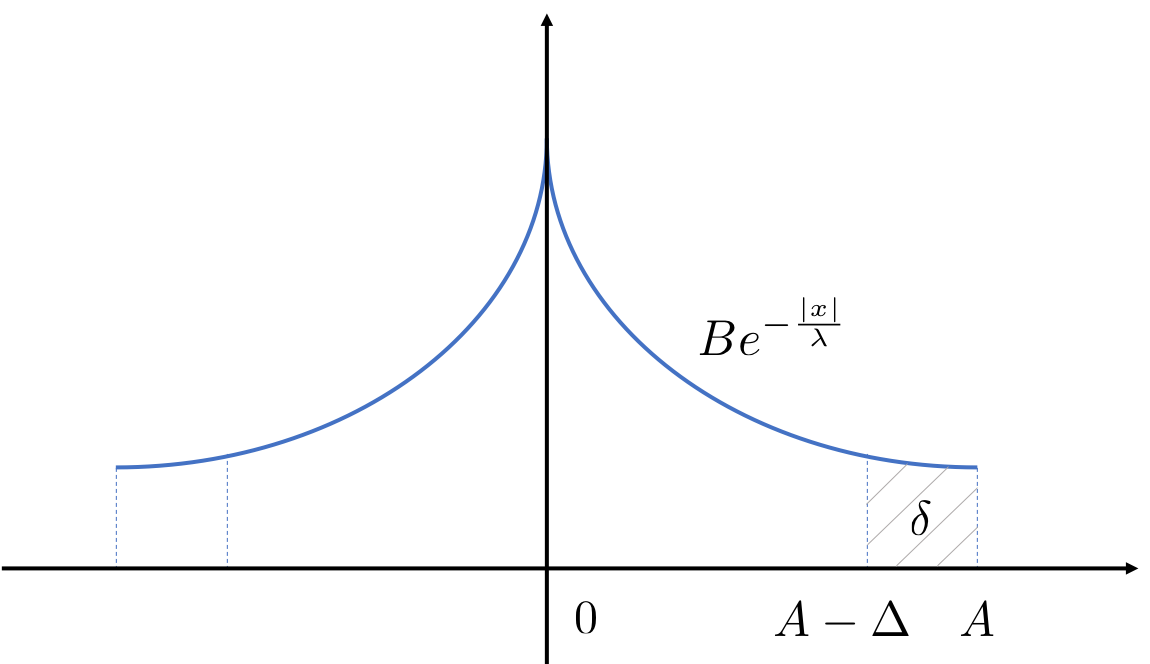}
\caption{Noise probability density function $\ft$ of the truncated Laplacian mechanism. $\ft$ is a symmetric truncated exponentinal function with a probability mass $\delta$ in the last interval with length $\D$ in the support of $\ft$, i.e., the interval $[A-\D, A]$. The decay rate $\frac{\ft(x)}{\ft(x+\D)}$ is exactly $e^\e$ for $x \in [0, A-\D)$. The parameters $A$ and $B$ are then derived by solving the equations that $\int_{x \in \R} \ft(x)dx = 1$ and $\int_{A-\D}^{A} \ft(x) dx = \delta$.}
\label{fig:density}
\end{figure}

$\ft$ is a valid probability density function, as $\ft(x) \ge 0$ and $\int_{x \in \R} \ft(x)dx = \int_{0}^{A} 2 B e^{-\frac{|x|}{\lambda}} dx
= 2 \lambda B (1 - e^{-\frac{A}{\lambda}}) = 1$.

We discuss the key properties of the symmetric probability density function $\ft(x)$:
\begin{itemize}
    \item The decay rate in $[0, A-\D]$ is exactly $e^{\e}$, i.e., $\frac{\ft(x)}{\ft(x+\D)} = e^{\e} , \; \forall x \in [0, A-\D]$.
    \item The probability mass in the interval $[A-\D, A]$ is $\delta$, i.e., $\tlp([A - \D, A]) = \delta$. Indeed, 
        \begin{align*}
        &\int_{A-\D}^{A} \ft(x) dx = \int_{A-\D}^{A} B e^{-\frac{|x|}{\lambda}} dx \\
        & = \lambda B (e^{-\frac{A-\D}{\lambda}} - e^{-\frac{A}{\lambda}}) = \lambda B e^{-\frac{A}{\lambda}} (e^{\frac{\D}{\lambda}} - 1) \\
        &= (e^{\frac{\D}{\lambda}} - 1) \frac{e^{-\frac{A}{\lambda}}}{2(1-e^{-\frac{A}{\lambda}})} = (e^{\frac{\D}{\lambda}} - 1) \frac{1}{2 (e^{\frac{A}{\lambda}}-1)} = \delta.
        \end{align*}
    \item The decay rate $\frac{\ft(x)}{\ft(x+\D)}$ is $+\infty$ for $ x \in (A-\D, A]$, as  $\ft(x) = 0$ for  $x \in (A, +\infty)$.
\end{itemize}

\begin{definition}[Truncated Laplacian mechanism]
Given the query sensitivity $\D$, and the privacy parameters $\e, \delta$, the truncated Laplacian mechanism adds a noise with probability distribution $\tlp$ defined in \eqref{eq:noise-pdf} to the query output. 
\end{definition}

\begin{theorem}
The truncated Laplacian mechanism preserves $(\e,\delta)$-differential privacy.
\end{theorem}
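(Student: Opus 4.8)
The plan is to invoke Lemma~\ref{lem:dp_constraint}, which reduces the claim to verifying that the truncated Laplacian distribution $\tlp$ satisfies the constraint \eqref{eq:mainconstraint}, i.e. $\tlp(S) - e^\e \tlp(S+d) \le \delta$ for every $|d| \le \D$ and every measurable $S \subset \R$. Since $\ft$ is an even function, $\tlp$ is symmetric, so the constraints for $(S,d)$ and for $(-S,-d)$ coincide (because $\tlp(-S) = \tlp(S)$ and $\tlp(-(S+d)) = \tlp(S+d)$); as $-S$ ranges over all measurable sets, it therefore suffices to treat $d \in [0,\D]$.

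For a fixed such $d$, I would first observe that $\tlp(S) - e^\e \tlp(S+d) = \int_S \bigl(\ft(x) - e^\e \ft(x+d)\bigr)\,dx$, which is maximized over measurable $S$ by the set $S^* = \{x : \ft(x) > e^\e \ft(x+d)\}$. Hence the worst-case value equals $\int_\R \bigl(\ft(x) - e^\e \ft(x+d)\bigr)^+ dx$, and the task becomes locating precisely where $\ft(x) > e^\e \ft(x+d)$.

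The core of the argument is a case analysis on the position of $x$ relative to the support $[-A,A]$, using the calibration $\D/\lambda = \e$. When both $x$ and $x+d$ lie in the support I would split into the sub-cases $x \ge 0$, $x \le -d$, and $-d \le x \le 0$: in the first two the ratio $\ft(x)/\ft(x+d)$ equals $e^{d/\lambda}$ or $e^{-d/\lambda}$, while in the straddling case it equals $e^{(2x+d)/\lambda}$; in all three it is bounded above by $e^{d/\lambda} \le e^\e$, so the integrand is $\le 0$ and contributes nothing. The only region producing a positive integrand is $x \in (A-d, A]$, where $x+d > A$ forces $\ft(x+d)=0$ while $\ft(x) > 0$; note $A > \D \ge d$ holds precisely because $\delta < \frac{1}{2}$, so this interval is nonempty and lies in the decreasing branch $[0,A]$. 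Thus the worst-case value equals $\tlp\bigl((A-d, A]\bigr)$.

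Finally, since $(A-d,A] \subseteq [A-\D, A]$ for $d \le \D$, monotonicity of measure gives $\tlp\bigl((A-d,A]\bigr) \le \tlp\bigl([A-\D,A]\bigr) = \delta$, where the last equality is exactly the probability-mass computation already carried out for $\ft$. This verifies \eqref{eq:mainconstraint} and hence the $(\e,\delta)$-differential privacy of the mechanism. I expect the main obstacle to be the straddling sub-case $-d \le x \le 0$: there the absolute value in the exponent changes behavior, and one must confirm that $e^{(2x+d)/\lambda}$ stays at or below $e^\e$ across the whole sub-interval rather than only at its endpoints — which is exactly where the choice $\lambda = \D/\e$ is essential.
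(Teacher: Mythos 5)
Your proposal is correct and takes essentially the same route as the paper: reduce via Lemma~\ref{lem:dp_constraint} and symmetry to $d \in [0,\D]$, check that the density ratio $\ft(x)/\ft(x+d)$ never exceeds $e^\e$ wherever $\ft(x+d) > 0$, and bound the only violating region --- the tail where $x + d$ exits the support --- by the mass $\delta$ of $[A-\D, A]$. The only (harmless) difference is that you identify the exact worst-case set $S^* = (A-d, A]$ via the positivity set $\{x : \ft(x) > e^\e \ft(x+d)\}$, whereas the paper excludes $(-\infty, -\frac{\D}{2}]$ and $[-\frac{\D}{2}, A-\D]$ and then takes $S = [A-\D, +\infty)$ by monotonicity; both arguments terminate in the same bound $\tlp\bigl([A-\D, A]\bigr) = \delta$.
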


\begin{proof}
Equivalently, we need to show that the truncated Laplacian distribution $\tlp$ defined in \eqref{eq:noise-pdf} satisfies the $(\e,\delta)$-differential privacy constraint \eqref{eq:mainconstraint}. 

We are interested in maximizing $\tlp (S)  - e^\e \tlp(S + d)$ in \eqref{eq:mainconstraint} and show that the maximum over $S \subseteq \R$ is upper bounded by $\delta$. Since $\ft(x)$ is symmetric and monotonically decreasing in $[0, +\infty)$, without loss of generality, we can assume $d \ge 0$ and thus $d \in [0, \D]$.

To maximize $\tlp (S)  - e^\e \tlp(S + d)$, $S$ shall not contain points in $(-\infty, -\frac{\D}{2}]$, as
\begin{align*}
\ft(x) \le \ft(x+d), \forall x \in (-\infty, -\frac{\D}{2}].
\end{align*}

$S$ shall not contain points in $[-\frac{\D}{2}, A - \D]$, as
\begin{align*}
\ft(x) \le e^\e \ft(x+d), \forall x \in [-\frac{\D}{2}, A - \D].
\end{align*}

Therefore, $\tlp (S)  - e^\e \tlp(S + d)$ is maximized for some set $S \subseteq [A -\D, +\infty)$. Since $\ft(x)$ is monotonically decreasing in  $[A -\D, +\infty)$, $\tlp (S)  - e^\e \tlp(S + d)$ is maximized at $S = [A - \D, +\infty)$ and the maximum value is $\int_{A-\D}^{A-\D + d} f(x)dx \le \int_{A-\D}^{A} \ft(x) dx = \delta.$

We conclude that $\tlp$ satisfies the $(\e,\delta)$-differential privacy constraint \eqref{eq:mainconstraint}.
\end{proof}

Next, we apply the truncated Laplacian mechanism to derive new upper bounds on the minimum noise amplitude $\opta$ and noise power $\optb$.

\begin{theorem}[Upper Bound on Minimum Noise Amplitude]\label{thm:noise-amplitude}
\begin{align}
\opta \le \uppa:= \frac{\D}{\e} (1- \frac{\log(1 + \frac{e^\e - 1}{2\delta})}{\frac{e^\e - 1}{2\delta}}). \label{eq:l1-upperbound}
\end{align}
\end{theorem}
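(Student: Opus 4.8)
The plan is to exploit the fact, established in the preceding theorem, that the truncated Laplacian distribution $\tlp$ is a member of the feasible set $\edp$. Since $\opta$ is by definition the infimum of the expected noise amplitude over all distributions in $\edp$, it is immediately bounded above by the amplitude of this one feasible distribution:
\[
\opta \le \int_{x \in \R} |x|\, \ft(x)\, dx.
\]
All the conceptual content is thus already done, and the remainder is a direct computation: evaluate this integral in closed form and verify that it equals $\uppa$.

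First I would use the symmetry of $\ft$ to reduce the integral to $2B \int_0^A x\, e^{-x/\lambda}\, dx$. A single integration by parts (taking $u = x$ and $dv = e^{-x/\lambda}\,dx$) gives $\int_0^A x\, e^{-x/\lambda}\,dx = \lambda^2 - \lambda A e^{-A/\lambda} - \lambda^2 e^{-A/\lambda}$. Substituting the normalization constant $B = \frac{1}{2\lambda(1 - e^{-A/\lambda})}$ collapses the prefactor $2B\lambda$ to $\frac{1}{1 - e^{-A/\lambda}}$, so the amplitude simplifies cleanly to $\lambda - \frac{A\, e^{-A/\lambda}}{1 - e^{-A/\lambda}}$.

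Next I would eliminate $A$ and $\lambda$ in favor of the privacy parameters. Writing $M := 1 + \frac{e^\e - 1}{2\delta}$, the definition $A = \frac{\D}{\e}\log M = \lambda \log M$ yields $e^{-A/\lambda} = 1/M$, hence $\frac{e^{-A/\lambda}}{1 - e^{-A/\lambda}} = \frac{1}{M-1}$ and $\frac{A\, e^{-A/\lambda}}{1 - e^{-A/\lambda}} = \lambda\,\frac{\log M}{M-1}$. Since $M - 1 = \frac{e^\e - 1}{2\delta}$ and $\lambda = \frac{\D}{\e}$, the amplitude becomes $\frac{\D}{\e}\bigl(1 - \frac{\log M}{M-1}\bigr)$, which is exactly $\uppa$.

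I do not anticipate a genuine obstacle, since the feasibility of $\tlp$ carries all the weight and reduces the claim to a calculus exercise. The only care required is in tracking the boundary terms from the integration by parts and in the substitution that reduces the exponential factor to $1/M$; the most error-prone point is checking that the $-\lambda A e^{-A/\lambda}$ and $-\lambda^2 e^{-A/\lambda}$ contributions combine correctly with the denominator $1 - e^{-A/\lambda}$ so that the leading $\lambda$ term emerges without residue.
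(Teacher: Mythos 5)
Your proposal is correct and follows essentially the same route as the paper: you invoke the feasibility of $\tlp$ (established in the preceding theorem) to bound $\opta$ by the expected amplitude of this one distribution, then evaluate $2B\int_0^A x e^{-x/\lambda}\,dx$ by integration by parts and simplify using $e^{-A/\lambda} = \bigl(1 + \frac{e^\e - 1}{2\delta}\bigr)^{-1}$, exactly as the paper does. The intermediate expression $\lambda - \frac{A e^{-A/\lambda}}{1 - e^{-A/\lambda}}$ matches the paper's $\lambda - \frac{A}{e^{A/\lambda}-1}$, and your shorthand $M$ is only a cosmetic difference.
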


\begin{proof}
We can compute the expected noise amplitude for the truncated Laplacian distribution $\tlp$ defined in \eqref{eq:noise-pdf} via 
\begin{align*}
\uppa& := \int_{x \in \R} \ft(x) |x| dx  = 2  \int_0^A B e^{-\frac{x}{\lambda}} x dx   \\
&= 2B \lambda ( - A e^{-\frac{A}{\lambda}} + \int_0^A e^{-\frac{x}{\lambda}} dx ) \\
&= 2B \lambda ( - A e^{-\frac{A}{\lambda}} + \lambda(1- e^{-\frac{A}{\lambda}}) )  \\
&= -\frac{A e^{-\frac{A}{\lambda}}}{1- e^{-\frac{A}{\lambda}}}  + \lambda  = \lambda - \frac{A}{e^{\frac{A}{\lambda}}-1}  \\
& = \frac{\D}{\e} (1- \frac{\log(1 + \frac{e^\e - 1}{2\delta})}{\frac{e^\e - 1}{2\delta}}).
\end{align*}

Since the truncated Laplacian mechanism preserves $(\e, \delta)$-differential privacy, this gives an upper bound on the minimum noise amplitude $\opta$ under $(\e, \delta)$-differential privacy.
\end{proof}

In Theorem \ref{thm:noise-amplitude}, the upper bound $\uppa$ is composed of two parts. The first part is $\frac{\D}{\e}$, which is the noise amplitude of the Laplacian mechanism under $\e$-differential privacy. The second part reduces the noise by a portion of $\frac{\log(1 + \frac{e^\e - 1}{2\delta})}{\frac{e^\e - 1}{2\delta}}$ due to the $\delta$-relaxation in $(\e, \delta)$-differential privacy.

We analyze the asympotic properties of $\uppa$ in the high privacy regimes as $\e \to 0, \delta \to 0$:
\begin{itemize}
    \item Given $\e$, $\lim_{\delta \to 0} \uppa = \frac{\D}{\e}$.  The truncated Laplacian mechanism will be reduced to the standard Laplacian mechanism as $\delta \to 0$,.
    
    \item Given $\delta$, $\lim_{\e \to 0} \uppa = \frac{\D}{4\delta}$. Indeed, when $\e \to 0$, $\frac{e^\e-1}{2\delta} \to 0$, and thus
    \begin{align*}
      \uppa &\approx \frac{\D}{\e}   (1- \frac{   \frac{e^\e - 1}{2\delta} - \frac{(\frac{e^\e - 1}{2\delta})^2}{2}   }{\frac{e^\e - 1}{2\delta}} ) \\
      &=  \frac{\D}{\e} \frac{\frac{e^\e - 1}{2\delta}}{2} \approx \frac{\D}{\e} \frac{\e}{4\delta} = \frac{\D}{4\delta}.
    \end{align*}
    As $\e \to 0$, the truncated Laplacian distribution is reduced to a uniform distribution in the interval $[-\frac{\D}{2\delta}, \frac{\D}{2\delta}]$ with probability density $\frac{\delta}{\D}$.
    
    \item In the regime $\delta = \e \to 0$, the upper bound
    \begin{align}
     \uppa &= \frac{\D}{\e} (1- \frac{\log(1 + \frac{e^\e - 1}{2\e})}{\frac{e^\e - 1}{2\e}}) \nonumber \\
     & \approx \frac{\D}{\e} (1- \frac{\log(1 + \frac{\e}{2\e})}{\frac{\e}{2\e}})  \nonumber \\
     &= \frac{\D}{\e} (1- 2 \log{\frac{3}{2}}). \label{eq:truncated-lap-upper-l1}
    \end{align}
    In Section \ref{sec:tight}, we show that the constant factor $(1- 2 \log{\frac{3}{2}})$ is actually tight and the upper bound $\uppa$ matches the lower bound $\lowa$ defined in Theorem \ref{thm:noise-amplitude-lower-bound}.
\end{itemize}

\begin{theorem}[Upper Bound on Minimum Noise Power]\label{thm:noise-power}
Define 
  \begin{align}
  \uppb:= \frac{2\D^2}{\e^2} (1 -  \frac{ \frac{1}{2}\log^2 (1+\frac{e^\e - 1}{2\delta}) + \log(1+\frac{e^\e - 1}{2\delta})  }{\frac{e^\e - 1}{2\delta}}). \label{eq:l2-upperbound}
  \end{align}
  We have 
  \begin{align*}
  \optb \le \uppb.
  \end{align*}
  \end{theorem}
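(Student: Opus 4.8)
The plan is to exploit the fact, established in the preceding theorem, that the truncated Laplacian distribution $\tlp$ lies in $\edp$. Consequently its second moment $\int_{x \in \R} x^2 \ft(x)\,dx$ is automatically a feasible value of the infimum defining $\optb$, so it suffices to compute this moment in closed form and verify that the result equals $\uppb$. In other words, there is no privacy argument left to make here; the entire content of the theorem is the evaluation of a single integral, exactly parallel to the computation of $\uppa$ in the proof of Theorem~\ref{thm:noise-amplitude}.

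First I would use the symmetry of $\ft$ to write $\int_{x \in \R} x^2 \ft(x)\,dx = 2B \int_0^A x^2 e^{-x/\lambda}\,dx$. Then I would evaluate $\int_0^A x^2 e^{-x/\lambda}\,dx$ by integration by parts: one application drops the integrand to $x e^{-x/\lambda}$, at which point I can reuse the first-moment integral $\int_0^A x e^{-x/\lambda}\,dx = -\lambda A e^{-A/\lambda} + \lambda^2(1 - e^{-A/\lambda})$ already computed in Theorem~\ref{thm:noise-amplitude}. This yields $\int_0^A x^2 e^{-x/\lambda}\,dx = -\lambda A^2 e^{-A/\lambda} - 2\lambda^2 A e^{-A/\lambda} + 2\lambda^3(1 - e^{-A/\lambda})$. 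Multiplying by $2B = \frac{1}{\lambda(1 - e^{-A/\lambda})}$ cancels the factor $(1 - e^{-A/\lambda})$ in the surviving term and collapses the whole expression to the compact form $2\lambda^2 - \frac{A^2 + 2\lambda A}{e^{A/\lambda} - 1}$, using $\frac{e^{-A/\lambda}}{1 - e^{-A/\lambda}} = \frac{1}{e^{A/\lambda} - 1}$.

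The final step is to substitute the definitions $\lambda = \D/\e$ and $A = \frac{\D}{\e}\log(1 + \frac{e^\e - 1}{2\delta})$. The key simplification is that $A/\lambda = \log(1 + \frac{e^\e - 1}{2\delta})$, so $e^{A/\lambda} - 1 = \frac{e^\e - 1}{2\delta}$ exactly, which is the same cancellation that made $\uppa$ clean. Writing $L := \log(1 + \frac{e^\e - 1}{2\delta})$, the numerator $A^2 + 2\lambda A$ becomes $\lambda^2 L^2 + 2\lambda^2 L$, and the expression reduces to $2\lambda^2\bigl(1 - \frac{\frac{1}{2}L^2 + L}{(e^\e - 1)/(2\delta)}\bigr)$, which is precisely $\uppb$ after reinserting $\lambda^2 = \D^2/\e^2$.

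I do not expect a genuine obstacle: the only thing that can go wrong is arithmetic bookkeeping in the two integrations by parts and in tracking the $e^{-A/\lambda}$ factors through the cancellation with $B$. The one point deserving care is ensuring the $\frac{1}{2}L^2$ coefficient emerges correctly; it originates from the $A^2$ (equivalently $\lambda^2 L^2$) term after division by $2$, and this is exactly where the quadratic-in-$\log$ structure of $\uppb$ arises, distinguishing it from the linear-in-$\log$ structure of $\uppa$.
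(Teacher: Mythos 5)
Your proposal is correct and takes essentially the same route as the paper's own proof: both invoke the previously established $(\e,\delta)$-differential privacy of the truncated Laplacian mechanism so that the second moment of $\tlp$ upper-bounds $\optb$, and both evaluate $\int_{x\in\R} x^2 \ft(x)\,dx$ by symmetry and two integrations by parts, arriving at $2\lambda^2 - \frac{A^2 + 2\lambda A}{e^{A/\lambda}-1}$ and then substituting $e^{A/\lambda}-1 = \frac{e^\e-1}{2\delta}$ to obtain $\uppb$. All of your intermediate expressions, including $\int_0^A x^2 e^{-x/\lambda}\,dx = -\lambda A^2 e^{-A/\lambda} - 2\lambda^2 A e^{-A/\lambda} + 2\lambda^3\bigl(1-e^{-A/\lambda}\bigr)$, check out against the paper's computation.
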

  
\begin{proof}
We can compute the cost for the truncated Laplacian distribution via 
  \begin{align*}
  \uppb & := \int_{x \in \R} \ft(x) x^2 dx = 2 \int_0^A f(x) x^2 dx = 2  \int_0^A B e^{-\frac{x}{\lambda}} x^2 dx   \\
      &= 2B \lambda ( -A^2 e^{-\frac{A}{\lambda}} + \int_0^A e^{-\frac{x}{\lambda}} 2x dx )  \\
      &= 2B \lambda (-A^2 e^{-\frac{A}{\lambda}} + 2\lambda( - A e^{-\frac{A}{\lambda}} +  \int_0^A e^{-\frac{x}{\lambda}}dx) ) \\
      &= 2B \lambda (-A^2 e^{-\frac{A}{\lambda}} + 2\lambda( - A e^{-\frac{A}{\lambda}} +  \lambda - \lambda e^{-\frac{A}{\lambda}} ) \\
      &= \frac{ -A^2 e^{-\frac{A}{\lambda}}  -  2\lambda A e^{-\frac{A}{\lambda}} + 2 \lambda^2 - 2\lambda^2 e^{-\frac{A}{\lambda}}}{1- e^{-\frac{A}{\lambda}}}  \\
      &= 2 \lambda^2 - \frac{A^2 e^{-\frac{A}{\lambda}}  +  2\lambda A e^{-\frac{A}{\lambda}}}{1- e^{-\frac{A}{\lambda}}}  \\
      &= 2 \lambda^2 - \frac{A^2   +  2\lambda A}{e^{\frac{A}{\lambda}}-1}  \\
      &=\frac{2\D^2}{\e^2} - \frac{\frac{\D^2}{\e^2} \log^2 (1+\frac{e^\e - 1}{2\delta}) + \frac{2\D^2}{\e^2} \log(1+\frac{e^\e - 1}{2\delta})  }{\frac{e^\e - 1}{2\delta}} \\
      &= \frac{2\D^2}{\e^2} (1 -  \frac{ \frac{1}{2}\log^2 (1+\frac{e^\e - 1}{2\delta}) + \log(1+\frac{e^\e - 1}{2\delta})  }{\frac{e^\e - 1}{2\delta}}).
  \end{align*}

Since $\ft(x)$ can preserve $(\e, \delta)$-differential privacy, this gives an upper bound on  $\optb$.
 
\end{proof}


It turns out that the upper bounds $\uppa$ and $\uppb$ in Theorem \ref{thm:noise-amplitude} and Theorem \ref{thm:noise-power} are tight. We derive new lower bounds for $\opta$ and $\optb$ in the next section, and show that the multiplicative gap between the lower bounds and the upper bounds goes to zero in the high privacy regions in Section \ref{sec:tight}.

\section{Lower Bound} \label{sec:lower_bound}
In this section, we derive new lower bounds $\lowa$ and $\lowb$ on the minimum noise amplitude $\opta$ and minimum noise power $\optb$, respectively. The key technique is to discretize the continuous probability distribution and the loss function, and transform the continuous functional optimization problem to linear programming, and then apply the discrete result from \citet{GV_IT_Approximate}.

\citet{GV_IT_Approximate} derived lower bounds for an \emph{integer-valued} query function under $(\e, \delta)$-differential privacy. For integer-valued query functions, they formulate a linear programming problem with the objective of minimizing the additive noise. They studied the dual problem and constructed a dual feasible solution which gives a lower bound. Extending this result to the continuous setting, we show a similar lower bound for \emph{real-valued} query function under $(\e, \delta)$-differential privacy. 


First, we give a lower bound for $(\e, \delta)$-differential privacy for \emph{integer-valued} query function due to \citet{GV_IT_Approximate}.

Define
\begin{align*}
    a &:= \frac{\delta + \frac{e^{\e}-1}{2}}{e^\e},\\
    b &:= e^{-\e}.
\end{align*}
To avoid integer rounding issues, assume that there exists an integer $n$ such that $\sum_{k=0}^{n-1} a b^k = \frac{1}{2}$.

\begin{lemma}[Theorem 8 in \citet{GV_IT_Approximate}]\label{lem:discrete_dp}
Consider a symmetric cost function $\loss(\cdot): \Z \rightarrow \R$, where $\Z$ denotes the set of all integers. Given the privacy parameters $\e, \delta$ and the discrete query sensitivity $\dD \in \Z^+$, if a discrete probability distribution $\p$ satisfies
\begin{align}
    \p(S) - e^\e \p(S+d) \le \delta, \forall S \subseteq \Z, \forall d \in \Z, |d| \le \dD \label{eq:discrete_dp_constraint}
\end{align}
and the cost function $\loss(\cdot)$ satisfies
\begin{align}
  \sum_{i=1}^{n-1} b^i \big(2 \loss(i\dD) - \loss(1+(i-1)\dD) - \loss(1 + i\dD) \big) \ge \loss(1), \label{eq:gv_condition}
\end{align}
then we have
\begin{align}
    \Sigma_{i \in \Z} \loss(i) \p(i) \ge  2 \sum_{k=0}^{n-1} a b^k \loss(1 + k\dD). \label{eq:gv_lowerbound}{}
  \end{align}
\end{lemma}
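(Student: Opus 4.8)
The plan is to prove the bound by linear-programming weak duality, realized concretely as a pointwise certificate. Viewing the $p_i := \p(i)$ as nonnegative variables, $\sum_{i\in\Z}\loss(i)\p(i)$ is the objective of a linear program whose constraints are the normalization $\sum_i p_i = 1$ and the privacy inequalities \eqref{eq:discrete_dp_constraint}, one for each pair $(S,d)$. Weak duality then says: if I can produce a free multiplier $\mu\in\R$ and nonnegative multipliers $y_{S,d}\ge 0$ such that the certificate inequality
\begin{align*}
\loss(i) + \mu + \sum_{S,d} y_{S,d}\big(\mathbf{1}[i\in S] - e^{\e}\,\mathbf{1}[i-d\in S]\big) \ge 0
\end{align*}
holds for every $i\in\Z$, then multiplying by $p_i\ge 0$, summing over $i$, and using $\sum_i p_i = 1$ together with $\sum_{s\in S}(p_s - e^{\e}p_{s+d}) = \p(S) - e^{\e}\p(S+d)\le \delta$ yields $\sum_i\loss(i)\p(i)\ge -\mu - \delta\sum_{S,d}y_{S,d}$. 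So the entire task reduces to exhibiting feasible multipliers whose value $-\mu - \delta\sum y_{S,d}$ equals the right-hand side of \eqref{eq:gv_lowerbound}.

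Second, I would guess the multipliers from complementary slackness with the conjectured extremal distribution, namely the symmetric staircase that puts mass $ab^k$ at each of $\pm(1+k\dD)$ for $k=0,\dots,n-1$ (this has total mass $2\sum_{k=0}^{n-1}ab^k = 1$ and objective value exactly $2\sum_{k=0}^{n-1}ab^k\loss(1+k\dD)$). By symmetry of both $\loss$ and the constraint family under $i\mapsto -i$, $d\mapsto -d$, it suffices to work on the positive half-line and to activate only the maximal offset $d=\dD$ (the constraint for $d=\dD$ dominates those for smaller $d$). I would activate only the nested tail sets $S_k := \{\,j\in\Z : j \ge 1 + k\dD\,\}$ with geometrically decaying weights $y_{S_k,\dD}\propto b^{k}$, and fix $\mu$ so that the certificate inequality is \emph{tight} at every grid point $1+k\dD$, matching complementary slackness (the staircase puts positive mass exactly there). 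The constants $a=\big(\delta+\tfrac{e^{\e}-1}{2}\big)e^{-\e}$ and $b=e^{-\e}$, together with the normalization $\sum_{k=0}^{n-1}ab^k=\tfrac12$, are exactly what make the resulting dual value collapse to $2\sum_{k=0}^{n-1}ab^k\loss(1+k\dD)$; the summand $\tfrac{e^{\e}-1}{2}$ inside $a$ accounts for the innermost point $i=1$ straddling the origin.

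Third, I would verify the certificate at \emph{all} integers. At the grid points $1+k\dD$ it holds with equality by the choice of $\mu$. The only places it can fail are the off-grid integers in each gap; with the tail-set construction the slack is smallest at the points $i\dD$ (the integer one step below the grid point $1+i\dD$), and summing the per-point requirements against the weights $b^i$ telescopes precisely to
\begin{align*}
\sum_{i=1}^{n-1} b^i\big(2\loss(i\dD) - \loss(1+(i-1)\dD) - \loss(1+i\dD)\big) \ge \loss(1),
\end{align*}
which is hypothesis \eqref{eq:gv_condition}. Thus \eqref{eq:gv_condition} is exactly the condition guaranteeing dual feasibility, and once it holds the weak-duality bound of the first step gives \eqref{eq:gv_lowerbound}.

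The main obstacle is the construction and verification in the second and third steps: pinning down which privacy constraints $(S,d)$ to activate and with what weights so that the certificate is simultaneously tight at the grid points and feasible at the off-grid points, while controlling the semi-infinite family of set-indexed constraints. The delicate bookkeeping is the telescoping that converts the off-grid feasibility conditions into the single aggregated inequality \eqref{eq:gv_condition}, together with the boundary handling near the origin (the point $i=1$) that produces the $\tfrac{e^{\e}-1}{2}$ correction in $a$. Since this statement is quoted as Theorem~8 of \citet{GV_IT_Approximate}, I would either reproduce their dual construction along these lines or, if only the bound is needed downstream, invoke it directly.
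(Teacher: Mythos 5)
The paper gives no proof of this lemma---it is imported verbatim as Theorem~8 of \citet{GV_IT_Approximate}---so your closing move of invoking it directly is exactly what the paper does, and your reconstruction sketch follows the same linear-programming duality route that the paper attributes to that source (a dual feasible solution certifying the lower bound, with condition \eqref{eq:gv_condition} playing the role of dual feasibility). Your scaffolding is also consistent with the stated constants: the symmetric staircase placing mass $ab^k$ at $\pm(1+k\dD)$ has total mass one, attains the right-hand side of \eqref{eq:gv_lowerbound} as its objective value, and makes the privacy constraint for $S=\{j \ge 1\}$ with $d=\dD$ tight at $\delta$ precisely because $ae^{\e}=\delta+\frac{e^{\e}-1}{2}$, matching the complementary-slackness heuristic you describe.
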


\begin{theorem}[Lower Bound on Minimum Noise Amplitude] \label{thm:noise-amplitude-lower-bound}
Define
  \begin{align}
 \lowa&:= 2 a \sum_{k=0}^{n-1} b^k k \D \nonumber \\
  &= 2 a \left(\frac{b-b^n}{(1-b)^2}  - \frac{(n-1)b^n}{1-b}\right) \D. \label{eq:l1-lower-bound}
  \end{align}
 We have
  \begin{align*}
  \opta \ge \lowa.
  \end{align*}
\end{theorem}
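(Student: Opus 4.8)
The plan is to reduce the continuous lower bound to the discrete result in Lemma~\ref{lem:discrete_dp} via a discretization-and-limit argument. Fix an arbitrary $\p \in \edp$; since $\opta$ is the infimum over all such $\p$, it suffices to prove $\int_{\R} |x| \, \p(dx) \ge \lowa$ for each fixed $\p$ and then take the infimum. First I would choose a large integer $N$, set the mesh $h := \D/N$, and partition $\R$ into the centered bins $I_i := [ih - h/2, \, ih + h/2)$ for $i \in \Z$. I then define a discrete distribution $\dpp$ on $\Z$ by $\dpp(i) := \p(I_i)$, which is a genuine probability distribution since the $I_i$ tile $\R$.

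The key structural point is that the bins are shift-compatible: $I_{i+k} = I_i + kh$. Hence for any $S \subseteq \Z$ and any integer $d$, writing $T := \bigcup_{i \in S} I_i$, one has $\dpp(S) = \p(T)$ and $\dpp(S+d) = \p(T + dh)$, so $\dpp(S) - e^\e \dpp(S+d) = \p(T) - e^\e \p(T + dh)$. For $|d| \le N$ the continuous shift obeys $|dh| \le \D$, so the continuous constraint~\eqref{eq:mainconstraint} gives $\dpp(S) - e^\e\dpp(S+d) \le \delta$; that is, $\dpp$ satisfies the discrete constraint~\eqref{eq:discrete_dp_constraint} with discrete sensitivity $\dD = N$. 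I would then apply Lemma~\ref{lem:discrete_dp} with the symmetric cost $\loss(i) := h|i|$. A direct computation shows that for this cost the bracket in~\eqref{eq:gv_condition} equals $h(N-2)$ for every $i$, so the hypothesis reduces to $(N-2)\sum_{i=1}^{n-1} b^i \ge 1$, which holds once $N$ is large enough (note that $n$, $a$, $b$ are fixed by $\e,\delta$ and independent of $N$). The lemma then yields $\sum_{i \in \Z} h|i|\,\dpp(i) \ge 2\sum_{k=0}^{n-1} a b^k \, h(1 + kN)$.

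It remains to pass to the limit. On the left, for $x \in I_i$ one has $|x| \ge h|i| - h/2$, so $\int_\R |x|\,\p(dx) \ge \sum_i h|i|\,\p(I_i) - \tfrac{h}{2}$, using $\sum_i \p(I_i) = 1$. On the right, substituting $h = \D/N$ gives $2\sum_{k=0}^{n-1} a b^k \, h(1+kN) = \tfrac{2\D}{N}\sum_{k=0}^{n-1} a b^k + 2a\D\sum_{k=0}^{n-1} b^k k$, whose first term vanishes and whose second term is exactly $\lowa$ as $N \to \infty$. Combining the two estimates and letting $N \to \infty$ (so $h = \D/N \to 0$) yields $\int_\R |x|\,\p(dx) \ge \lowa$, and taking the infimum over $\p \in \edp$ gives $\opta \ge \lowa$. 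The closed form in~\eqref{eq:l1-lower-bound} is then simply the evaluation of the arithmetico-geometric sum $\sum_{k=0}^{n-1} k b^k$.

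I expect the main obstacle to be the discretization step: verifying that the constructed $\dpp$ inherits the $(\e,\delta)$-differential privacy constraint for all integer shifts $|d| \le N$. This is precisely where the shift-compatibility of the centered bins and the matching of $\dD = N$ with the continuous sensitivity $\D$ are essential. The two accompanying limits, namely the threshold on $N$ needed for the cost condition~\eqref{eq:gv_condition} and the $O(h)$ discretization error on the objective, are routine once this reduction is in place.
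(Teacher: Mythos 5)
Your proposal is correct and follows essentially the same route as the paper's own proof: the same centered bins of width $\D/N$, the same discretized distribution $\dpp$ inheriting the discrete constraint~\eqref{eq:discrete_dp_constraint} with sensitivity $\dD = N$, and the same verification of condition~\eqref{eq:gv_condition} for $N$ large before invoking Lemma~\ref{lem:discrete_dp}. The only cosmetic difference is that you evaluate the cost at bin centers $h|i|$ and absorb an $h/2$ error term in a limit $N \to \infty$, whereas the paper defines $\dll(i) = \frac{\D}{2N}(2i-1)$ at the bin's inner endpoint, so that $\int_{\R} |x|\,\p(dx) \ge \sum_{i \in \Z} \dll(i)\,\dpp(i)$ holds exactly and the bound $\ge \lowa$ already follows at any fixed sufficiently large $N$.
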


\begin{proof}
Given $\p \in \edp$, we can derive a lower bound on the cost by discretizing the probability distributions and applying the lower bound \eqref{eq:gv_lowerbound} for integer-valued query functions in Lemma \ref{lem:discrete_dp}.

We first discretize the probability distributions $\p$. Given a positive integer $N \ge 0$, define a discrete probability distribution $\dpp$ via
\begin{align*}
  \dpp(i) := \p\big([\frac{\D}{2N}(2i-1), \frac{\D}{2N}(2i+1))\big), \forall i \in \Z.
\end{align*}

For the noise cost function $|x|$, define the corresponding discrete cost function $\dll$ via
\begin{align*}
  \dll(i) \triangleq  \begin{cases}
      0, & i = 0 \\
      \frac{\D}{2N}(2i-1), & i \ge 1 \\
      \dll(-i), & i < 0.
    \end{cases}
\end{align*}

It is ready to see that
\begin{align*}
  \int_{x \in \R} |x|  \p(dx) \ge \Sigma_{i \in \Z} \dpp(i) \dll(i).
\end{align*}

As the continuous probability distribution $\p$ satisfies $(\e,\delta)$-differential privacy constraint  \eqref{eq:mainconstraint} with the query sensitivity $\D$, the discrete probability distribution $\dpp$ satisfies the discrete $(\e, \delta)$-differential privacy constraint \eqref{eq:discrete_dp_constraint} with query sensitivity $\dD = N$, i.e., $\dpp$ satisfies
\begin{align*}
  \dpp (S)  - e^\e \dpp(S + d) \le  \delta, \forall S \subseteq \Z, |d| \le N.
\end{align*}

We can verify that the condition \eqref{eq:gv_condition} in Lemma \ref{lem:discrete_dp} holds for $\dll$ and $\dpp$ with query sensitivity $\dD = N$ when $N$ is sufficiently large. Indeed, when $N \ge a + 2$,
\begin{align*}
 &\sum_{i=1}^{n-1} b^i [2 \dll(iN) - \dll(1+(i-1)N) - \dll(1 + iN) ] \\
 & \;\;\;\;\;- \dll(1) \\
  &= \sum_{i=1}^{n-1} b^i \frac{\D}{2N} [2(2iN -1) - 2 (1 + (i-1)N) + 1 \\
  & \;\;\;\;\; - 2(1+iN) + 1 ] - \frac{\D}{2N} \\
  &= \sum_{i=1}^{n-1} b^i \frac{\D}{2N} (2N - 4) - \frac{\D}{2N} \\
  &= \frac{\D}{2N} (\sum_{i=1}^{n-1} b^i ( 2N - 4) - 1) \\
  &= \frac{\D}{2N} (\frac{2N-4}{2a} - 1) = \frac{\D}{2N} (\frac{N-2}{a} - 1)  \ge 0.
\end{align*}

 The corresponding lower bound in \eqref{eq:gv_lowerbound} for $\dll$ and $\dpp$ is
\begin{align*}
  & 2 \sum_{k=0}^{n-1} ab^k \dll(1 + kN) = 2 \sum_{k=0}^{n-1} ab^k \frac{\D}{2N}(2kN + 1) \\
  & = 2 \sum_{k=0}^{n-1} ab^k ( k\D  + \frac{\D}{2N}) = 2 a\D \sum_{k=0}^{n-1} b^k k + \frac{\D}{2N} \\
  &\ge 2 a\D \sum_{k=0}^{n-1} b^k k = 2 a \left(\frac{b-b^n}{(1-b)^2}  - \frac{(n-1)b^n}{1-b}\right) \D \\
  &= \lowa
\end{align*}

Therefore, for any $\p \in \edp$, we have
\begin{align*}
  \int_{x \in \R}  |x|  \p(dx) \ge \Sigma_{i \in \Z} \dpp(i) \dll(i) \ge \lowa,
\end{align*}
and thus $\opta \ge \lowa.$
\end{proof}

Similarly, we derive the lower bound for the minimum noise power $\optb$.

\begin{theorem}[Lower Bound on Minimum Noise Power] \label{thm:noise-power-lower-bound}
Define 
\begin{align}
  \lowb &:= 2 \sum_{k=0}^{n-1} ab^k  k^2 \D^2 \nonumber \\
  &=  \frac{2a \D^2}{1-b} [-b + 2 ( \frac{b(1-b^{n-1})}{(1-b)^2} - \frac{(n-1) b^{n}}{1-b} )\nonumber  \\
  & \;\;\;\;- \frac{b^2(1-b^{n-2})}{1-b} - (n-1)^2 b^{n}] . \label{eq:l2-lower-bound}
\end{align}
We have
  \begin{align*}
  & \optb \ge \lowb.
  \end{align*}
\end{theorem}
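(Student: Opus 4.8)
The plan is to follow exactly the template used in the proof of Theorem~\ref{thm:noise-amplitude-lower-bound}, replacing the amplitude cost $|x|$ with the power cost $x^2$. First I would discretize an arbitrary $\p \in \edp$ into the same discrete distribution $\dpp(i) := \p([\frac{\D}{2N}(2i-1), \frac{\D}{2N}(2i+1)))$ for a large positive integer $N$. For the quadratic cost I would define the corresponding discrete cost function by evaluating $x^2$ at the left endpoint of each interval (the point of minimum $x^2$, taking the value $0$ at $i=0$), i.e.
\begin{align*}
  \dll(i) := \begin{cases} 0, & i = 0 \\ \big(\tfrac{\D}{2N}(2i-1)\big)^2, & i \ge 1 \\ \dll(-i), & i < 0, \end{cases}
\end{align*}
so that $\dll(i)$ is a pointwise lower bound for $x^2$ on the $i$-th interval and hence $\int_{x \in \R} x^2 \p(dx) \ge \Sigma_{i \in \Z} \dpp(i) \dll(i)$. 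Since the discretization of the distribution is unchanged, $\dpp$ again satisfies the discrete $(\e,\delta)$-differential privacy constraint \eqref{eq:discrete_dp_constraint} with query sensitivity $\dD = N$, exactly as in the amplitude case, so no new work is needed on the privacy side.

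The one genuinely new computation---and the main obstacle---is verifying that the cost condition \eqref{eq:gv_condition} holds for this quadratic $\dll$ when $N$ is large. Writing $c := \frac{\D}{2N}$ so that $\dll(j) = c^2 (2j-1)^2$ for $j \ge 1$ and $\dll(1) = c^2$, I would expand the telescoped combination and obtain
\begin{align*}
  2\dll(iN) - \dll(1+(i-1)N) - \dll(1+iN) = 4 c^2 N \big( N(2i-1) - (4i-1) \big),
\end{align*}
so that condition \eqref{eq:gv_condition} reduces, after cancelling $c^2 > 0$, to $4N \sum_{i=1}^{n-1} b^i\big(N(2i-1) - (4i-1)\big) \ge 1$. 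The leading term in $N$ on the left is $4N^2 \sum_{i=1}^{n-1} b^i(2i-1)$, which is strictly positive (each summand has $2i-1 \ge 1$ and $b > 0$) and grows quadratically in $N$, so the inequality holds for all sufficiently large $N$. I expect the only subtlety here is bookkeeping the lower-order terms to produce a clean explicit threshold on $N$; unlike the amplitude case the dominant part is quadratic rather than linear in $N$, which if anything makes the inequality easier to secure.

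Finally, with \eqref{eq:gv_condition} verified I would invoke Lemma~\ref{lem:discrete_dp} to obtain $\Sigma_{i \in \Z} \dpp(i)\dll(i) \ge 2\sum_{k=0}^{n-1} ab^k \dll(1+kN)$. Substituting $\dll(1+kN) = c^2(2kN+1)^2 = \D^2 k^2 + \frac{\D^2 k}{N} + \frac{\D^2}{4N^2}$ and discarding the two nonnegative $O(1/N)$ correction terms leaves $2\sum_{k=0}^{n-1} ab^k k^2 \D^2 = \lowb$. Chaining the inequalities gives $\int_{x \in \R} x^2 \p(dx) \ge \lowb$ for every $\p \in \edp$, and hence $\optb \ge \lowb$. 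The closed form asserted in \eqref{eq:l2-lower-bound} then follows by evaluating the weighted geometric sum $\sum_{k=0}^{n-1} b^k k^2$ in closed form, which is a routine calculation analogous to the one already carried out for $\sum_{k=0}^{n-1} b^k k$ in Theorem~\ref{thm:noise-amplitude-lower-bound}.
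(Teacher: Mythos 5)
Your proposal is correct and takes essentially the same approach as the paper's own proof: the same discretization $\dpp$, the same quadratic discrete cost $\dll$, the same verification that condition \eqref{eq:gv_condition} holds for $N$ sufficiently large (your expansion $4c^2N\big(N(2i-1)-(4i-1)\big)$ agrees exactly with the paper's $(8i-4)N^2 - 16iN + 4N$), and the same step of invoking Lemma~\ref{lem:discrete_dp} and discarding the nonnegative $O(1/N)$ terms in $\dll(1+kN)$ to arrive at $\lowb$. No gaps.
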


\begin{proof}
We first discretize the probability distribution $\p$. Given a positive integer $N \ge 0$, define a discrete probability distribution $\dpp$ via
\begin{align*}
  \dpp(i) \triangleq \p\big([\frac{\D}{2N}(2i-1), \frac{\D}{2N}(2i+1))\big), \forall i \in \Z.
\end{align*}

Define the corresponding discrete cost function $\dll$ via
\begin{align*}
  \dll(i) \triangleq  \begin{cases}
      0, & i = 0 \\
      (\frac{\D}{2N}(2i-1))^2, & i \ge 1 \\
      \dll(-i), & i < 0.
    \end{cases}
\end{align*}

It is easy to see that
\begin{align*}
  \int_{x \in \R} \loss (x)  \p(dx) \ge \Sigma_{i \in \Z} \dpp(i) \dll(i).
\end{align*}

As the continuous probability distribution $\p$ satisfies $(\e,\delta)$-differential privacy constraint with continuous query sensitivity $\D$, the discrete probability distribution $\dpp$ satisfies the discrete $(\e, \delta)$-differential privacy constraint with discrete query sensitivity $N$, i.e., $\dpp$ satisfies
\begin{align*}
  \dpp (S)  - e^\e \dpp(S + d) \le  \delta, \forall S \subseteq \Z, |d| \le N.
\end{align*}

Next we verify that the condition (9) in Lemma 2 holds when $N$ is sufficiently large for the $\ell^2$ cost function. Indeed,
\begin{align*}
 &\;\; \sum_{i=1}^{n-1} b^i \big(2 \dll(iN) - \dll(1+(i-1)N) - \dll(1 + iN) \big) - \dll(1) \\
  &= \sum_{i=1}^{n-1} b^i \frac{\D^2}{4N^2} \big(2(2iN -1)^2 - (2 (1 + (i-1)N) - 1)^2  - (2(1+iN) - 1)^2 \big) - \frac{\D^2}{4N^2} \\
  &= \sum_{i=1}^{n-1} b^i \frac{\D^2}{4N^2} ((8i - 4)N^2 - 16 i N + 4N) - \frac{\D^2}{4N^2} \\
  & \ge 0,
\end{align*}
 where the last step holds when $N$ is sufficiently large.

 The lower bound in (10) is
\begin{align*}
  & 2 \sum_{k=0}^{n-1} ab^k \dll(1 + kN) \\
  & = 2 \sum_{k=0}^{n-1} ab^k \frac{\D^2}{4N^2}(2kN + 1)^2 \\
  & = 2 \sum_{k=0}^{n-1} ab^k \frac{\D^2}{4N^2} ( 4k^2 N^2  +  4kN + 1) \\
  & \ge 2 \sum_{k=0}^{n-1} ab^k \frac{\D^2}{4N^2} 4k^2 N^2  \\
  &=2 \sum_{k=0}^{n-1} ab^k  k^2 \D^2 \\
  &= 2a  \frac{-b + 2 ( \frac{b(1-b^{n-1})}{(1-b)^2} - \frac{(n-1) b^{n}}{1-b} ) - \frac{b^2(1-b^{n-2})}{1-b} - (n-1)^2 b^{n}}{1-b}  \D^2.
\end{align*}

Therefore, for any $\p \in \edp$, we have
\begin{align*}
  \int_{x \in \R}  x^2  \p(dx) & \ge \Sigma_{i \in \Z} \dpp(i) \dll(i) \\
  & \ge  2a  \frac{-b + 2 ( \frac{b(1-b^{n-1})}{(1-b)^2} - \frac{(n-1) b^{n}}{1-b} ) - \frac{b^2(1-b^{n-2})}{1-b} - (n-1)^2 b^{n}}{1-b}  \D^2,
\end{align*}
and thus
\begin{align*}
  \optb \ge   2a  \frac{-b + 2 ( \frac{b(1-b^{n-1})}{(1-b)^2} - \frac{(n-1) b^{n}}{1-b} ) - \frac{b^2(1-b^{n-2})}{1-b} - (n-1)^2 b^{n}}{1-b}  \D^2 = \lowb.
  \end{align*} 

\end{proof}

\section{Tightness of the Lower and Upper Bounds} \label{sec:tight}
In this section, we compare the lower bounds $\lowa, \lowb$ and the upper bounds $\uppa, \uppb$ (derived from the truncated Laplacian mechanism) for the minimum noise amplitude and noise power under $(\e,\delta)$-differential privacy. We show that they are close in the high privacy regions and the multiplicative gap goes to zero, which proves the tightness of these lower and upper bounds and thus establishes the near-optimality of the truncated Laplacian mechanism.  

\begin{theorem}[Tightness of Lower bound and Upper bound on Minimum Noise Amplitude]\label{thm:tight_l1}
\begin{align*}
  \lim_{\e \to 0}\frac{\lowa}{\uppa} &\ge 1 - 2\delta. \\
  \lim_{\delta \to 0} \frac{\lowa}{\uppa} &\ge  \frac{\e}{e^\e - 1} = 1 - 
\frac{\e}{2} + O(\e^2). \\
  \lim_{\e = \delta \to 0} \frac{\lowa}{\uppa} &= 1.
\end{align*}
\end{theorem}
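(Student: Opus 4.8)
The plan is to reduce both $\lowa$ and $\uppa$ to explicit functions of $\e$ and $\delta$ and then take limits. The pivotal first step is to solve the defining relation for $n$ in closed form. Writing $c := \frac{e^\e-1}{2\delta}$ and using $\sum_{k=0}^{n-1} a b^k = a\frac{1-b^n}{1-b} = \frac12$ with $a = \frac{\delta + (e^\e-1)/2}{e^\e}$ and $b = e^{-\e}$, a short computation gives
\[
  b^n = \frac{2\delta}{2\delta + e^\e - 1} = \frac{1}{1+c}, \qquad n\e = \log(1+c).
\]
This identity is the backbone of the argument: the same logarithm $\log(1+c)$ controls $\uppa$, and $n\D = A$ is exactly the truncation point, so it lets the discrete lower bound and the continuous upper bound be compared on the same footing. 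I would then record the limits of $\uppa$ directly from the asymptotics already established in the excerpt, namely $\uppa \to \frac{\D}{4\delta}$ as $\e\to 0$, $\uppa \to \frac{\D}{\e}$ as $\delta \to 0$, and $\uppa \to \frac{\D}{\e}(1-2\log\frac32)$ in \eqref{eq:truncated-lap-upper-l1}.

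I would next dispatch the two regimes where only one of $b\to 1$ or $n\to\infty$ occurs. For $\e\to 0$ with $\delta$ fixed, $c\to 0$, so $a\to\delta$, $b\to 1$, and $n\to\frac{1}{2\delta}$; the weighted sum degenerates to $\sum_{k=0}^{n-1} k = \frac{n(n-1)}{2}$, giving $\lowa \to \frac{\D}{4\delta}(1-2\delta)$ and ratio $1-2\delta$. For $\delta\to 0$ with $\e$ fixed, $c\to\infty$, so $b\in(0,1)$ is fixed while $n\to\infty$; then $b^n, nb^n\to 0$ collapse the closed form $\frac{b-b^n}{(1-b)^2} - \frac{(n-1)b^n}{1-b}$ to $\frac{b}{(1-b)^2}$, and with $a\to\frac{1-b}{2}$ one obtains $\lowa \to \frac{\D b}{1-b} = \frac{\D}{e^\e-1}$, hence ratio $\frac{\e}{e^\e-1} = 1 - \frac{\e}{2} + O(\e^2)$.

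The main obstacle is the coupled regime $\e = \delta \to 0$, where $b\to 1$ and $n\to\infty$ simultaneously, so neither previous simplification applies and the closed form for $\sum_{k} k b^k$ becomes an indeterminate difference. Here $c\to\frac12$, so $a\sim\frac{3\e}{2}$ and $n\e\to L := \log\frac32$, and I would treat the sum as a Riemann sum: with $b^k = e^{-k\e}$ and mesh $\e$,
\[
  \e^2\sum_{k=0}^{n-1} k\,b^k = \sum_{k=0}^{n-1}(k\e)\,e^{-k\e}\,\e \longrightarrow \int_0^{L} x e^{-x}\,dx = \tfrac13 - \tfrac23\log\tfrac32,
\]
using $e^{-L} = \frac23$. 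Multiplying by $\frac{2a\D}{\e^2}\sim\frac{3\D}{\e}$ gives $\lowa \to \frac{\D}{\e}(1 - 2\log\frac32)$, which matches $\uppa$ exactly, so the ratio tends to $1$. The delicate point to justify rigorously is the uniform control of the Riemann-sum error as the mesh and the endpoint both move with $\e$, together with the integer-rounding of $n$ (which is what leaves the first two statements as inequalities rather than exact equalities).
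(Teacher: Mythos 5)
Your proposal is correct, and in the first two regimes it coincides with the paper's proof: the paper likewise sends $\uppa \to \frac{\D}{4\delta}$ and $\lowa \to \bigl(\frac{1}{4\delta}-\frac{1}{2}\bigr)\D$ for fixed $\delta$, and uses $a \to \frac{1-e^{-\e}}{2}$, $b^n \to 0$, $nb^n \to 0$ to collapse \eqref{eq:l1-lower-bound} to $\frac{\D}{e^\e-1}$ for fixed $\e$. Where you genuinely diverge is the coupled regime $\e=\delta\to 0$. The paper works with the closed-form expression $2a\bigl(\frac{b-b^n}{(1-b)^2}-\frac{(n-1)b^n}{1-b}\bigr)\D$ and substitutes the \emph{asymptotic} facts $\frac{1-b}{2a}\to\frac{1}{3}$, $b^n\to\frac{2}{3}$, $n=\Theta(\log\frac{3}{2}/\delta)$, $a=\Theta(\frac{3}{2}\delta)$, grinding out $(1-2\log\frac{3}{2})\frac{\D}{\delta}$ by algebra on the indeterminate difference. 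You instead solve the normalization exactly, getting $b^n=\frac{1}{1+c}$ and $n\e=\log(1+c)$ with $c=\frac{e^\e-1}{2\delta}$ (a clean identity the paper only obtains in the limit), and then evaluate $\e^2\sum_{k=0}^{n-1}k\,e^{-k\e}$ as a Riemann sum converging to $\int_0^{L}xe^{-x}\,dx$ with $L=\log\frac{3}{2}$; your integral value $\frac{1}{3}-\frac{2}{3}\log\frac{3}{2}$ and prefactor $\frac{2a\D}{\e^2}\sim\frac{3\D}{\e}$ reproduce $(1-2\log\frac{3}{2})\frac{\D}{\e}$ correctly. Your route buys two things: the observation $n\D=\frac{\D}{\e}\log(1+c)=A$ makes it structurally transparent \emph{why} the discrete lower bound and the truncated Laplacian upper bound \eqref{eq:l1-upperbound} must meet (the Riemann limit is, up to scaling, the first moment of the truncated Laplacian itself), and the exact identity for $b^n$ would also streamline the other two cases. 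The paper's route is more elementary (no limit interchange to justify) but less illuminating. Your flagged "delicate point" is real but mild: the integrand $xe^{-x}$ is bounded with bounded variation on $[0,L+1]$, so the Riemann-sum error is $O(\e)$ plus an $O(|n\e-L|)$ endpoint term, and the integer-rounding of $n$ is assumed away by the paper's standing hypothesis that $\sum_{k=0}^{n-1}ab^k=\frac{1}{2}$ holds exactly for an integer $n$ — which is also the natural reading of why the first two conclusions are stated as inequalities, consistent with your remark.
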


\begin{proof}

1. $\delta$ is fixed, and $\e \to 0$: 

When $\e \to 0$, the upper bound $\uppa \to \frac{\D}{4\delta}$, and the lower bound  $\lowa \to 2 \delta \frac{n(n-1)}{2} \D = 2 \delta \frac{\frac{1}{2\delta}(\frac{1}{2\delta}-1)}{2} \D = (\frac{1}{4\delta} - \frac{1}{2})\D$. Therefore, 
  \begin{align*}
    \lim_{\e \to 0}\frac{\lowa}{\uppa} \ge \frac{ (\frac{1}{4\delta} - \frac{1}{2})\D}{\frac{\D}{4\delta}} = 1 - 2\delta.
\end{align*}

Note that $1 - 2\delta \to 1$, as $\delta \to 0$, and thus the multiplicative gap between $\lowa$ and $\uppa$ converges to zero.

2. $\e$ is fixed, and $\delta \to 0$: 

When $\delta \to 0$, the upper bound $\uppa \to \frac{\D}{\e}$. For the lower bound  $\lowa$, we have
  \begin{align*}
    a & \to \frac{1- e^{-\e}}{2}, \\
    b^n &\to 0, \\
    nb^n &\to 0,
  \end{align*}
  and thus  $\lowa \to \frac{\D}{\e^\e -1} $ as $\delta \to 0$. Therefore,
  \begin{align*}
  \lim_{\delta \to 0} \frac{\lowa}{\uppa} \ge \frac{\frac{\D}{\e^\e -1}}{\frac{\D}{\e}} =  \frac{\e}{e^\e - 1} = 1 - \frac{\e}{2} + O(\e^2).
\end{align*}

Therefore, the multiplicative gap between $\lowa$ and $\uppa$ converges to zero as $\e \to 0$.

3. $\epsilon = \delta \to 0$:

In this regime, $\uppa \approx \frac{\D}{\e} (1- 2 \log{\frac{3}{2}})$ as shown in Section~\ref{sec:upper_bound}. For the lower bound $\lowa$, since $\sum_{k=0}^{n-1}a b^k = \frac{1}{2}$, we have
\begin{align*}
  a\frac{1-b^n}{1-b} = \frac{1}{2} \Rightarrow b^n = 1 - \frac{1-b}{2a}.
\end{align*}

As $\e=\delta \to 0$,  $\frac{1-b}{2a} = \frac{1- e^{-\e}}{2 \frac{\delta + \frac{e^\e - 1}{2}}{e^\e}} \to \frac{1}{3}$, and thus
\begin{align*}
  \lim_{\delta \to 0} b^n &= 1 - \frac{1}{3} = \frac{2}{3}, \\
  n &= \Theta(\frac{\log \frac{3}{2}}{\delta}).
\end{align*}

Note that $a = \Theta(\frac{3}{2} \delta) $ as $\delta \to 0$.

Therefore, as $\e = \delta \to 0$,
\begin{align*}
  & 2 a \left(\frac{b-b^n}{(1-b)^2}  - \frac{(n-1)b^n}{1-b}\right) \D \\
  &\approx 2  a (\frac{1 -  \frac{2}{3} }{\delta^2} - \frac{\frac{\log \frac{3}{2}}{\delta}\frac{2}{3} }{\delta}) \D \\
   &= 2 a ( \frac{1}{3\delta^2} - \frac{\frac{2}{3} \log (\frac{3}{2})}{\delta^2}) \D\\
  &\approx 2   \frac{3}{2} \delta ( \frac{1}{3\delta^2} - \frac{\frac{2}{3} \log (\frac{3}{2})}{\delta^2}) \D\\
  &= (1 - 2 \log \frac{3}{2}) \frac{\D}{\delta}.
\end{align*}

Therefore, $\opta$ is lower bounded by $\lowa \approx (1 - 2 \log \frac{3}{2}) \frac{\D}{\delta}$ in the regime $\e = \delta \to 0$. Since it is also upper bounded by $\uppa \approx \frac{\D}{\e} (1- 2 \log{\frac{3}{2}})$, we conclude that $\lim_{\e = \delta \to 0} \frac{\lowa}{\uppa} = 1$. 

Note that our result closes the constant multiplicative gap in the discrete setting (see Equation (67) and (69) in \citet{GV_IT_Approximate}).
\end{proof}

Similarly, we show that the lower bound $\lowb$ and the upper bound $\uppb$ on the minimum noise power are also tight.
\begin{theorem}[Tightness of Lower bound and Upper bound on Minimum Noise Power]\label{thm:tight_l2}
  \begin{align*}
    \lim_{\e \to 0} \frac{\lowb}{\uppb} &\ge (1-\delta)(1 - 2\delta) = 1 - 3\delta + 2\delta^2. \\
    \lim_{\delta \to 0} \frac{\lowb}{\uppb} &\ge   \frac{\e^ 2(1 + e^{\e})}{2(e^\e - 1)^2} = 1 - \frac{\e}{2} + O(\e^2). \\ 
    \lim_{\e = \delta \to 0} \frac{\lowb}{\uppb} &= 1.
  \end{align*}
  \end{theorem}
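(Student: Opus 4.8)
The plan is to treat the three regimes one at a time, exactly as in the proof of Theorem~\ref{thm:tight_l1}: in each regime I compute the leading-order asymptotics of the upper bound $\uppb$ from Theorem~\ref{thm:noise-power} and of the lower bound $\lowb$ from Theorem~\ref{thm:noise-power-lower-bound} separately, and then form their ratio. Throughout I write $u := \frac{e^\e-1}{2\delta}$, so that $\uppb = \frac{2\D^2}{\e^2}\left(1 - \frac{\frac12\log^2(1+u) + \log(1+u)}{u}\right)$, and I recall $a = \frac{\delta + (e^\e-1)/2}{e^\e}$, $b = e^{-\e}$, with $n$ pinned down by $\sum_{k=0}^{n-1} a b^k = \frac12$. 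In the first regime ($\delta$ fixed, $\e \to 0$) we have $u \to 0$, so I Taylor-expand $\log(1+u) = u - \tfrac12 u^2 + \tfrac13 u^3 - \cdots$; the $u$ and $u^2$ contributions to $\tfrac12\log^2(1+u) + \log(1+u)$ cancel and the surviving leading term is $-\tfrac16 u^3$, giving $1 - \frac{\frac12\log^2(1+u)+\log(1+u)}{u} = \tfrac16 u^2 + O(u^3)$ and hence $\uppb \to \frac{\D^2}{12\delta^2}$ (the variance of the limiting uniform law on $[-\tfrac{\D}{2\delta}, \tfrac{\D}{2\delta}]$). For the lower bound $a \to \delta$, $b \to 1$, and the normalization forces $n \to \tfrac{1}{2\delta}$; using the exact identity $\sum_{k=0}^{n-1} k^2 = \frac{(n-1)n(2n-1)}{6}$ (rather than a continuum approximation) gives $\lowb \to \frac{\D^2(1-\delta)(1-2\delta)}{12\delta^2}$, so the ratio tends to $(1-\delta)(1-2\delta)$.

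In the second regime ($\e$ fixed, $\delta \to 0$) we have $u \to \infty$, the subtracted fraction vanishes, and $\uppb \to \frac{2\D^2}{\e^2}$. At the same time $a \to \frac{1-b}{2}$ with $b^n, nb^n \to 0$ and $n \to \infty$, so the finite sum converges to the full series $\sum_{k=0}^{\infty} b^k k^2 = \frac{b(1+b)}{(1-b)^3}$; this yields $\lowb \to \D^2\frac{b(1+b)}{(1-b)^2} = \D^2\frac{e^\e+1}{(e^\e-1)^2}$, and the ratio is $\frac{\e^2(1+e^\e)}{2(e^\e-1)^2}$, which a final expansion in $\e$ shows equals $1 - \tfrac\e2 + O(\e^2)$.

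In the third regime ($\e = \delta \to 0$) we have $u = \frac{e^\e-1}{2\e} \to \tfrac12$, so $\uppb \to \frac{2\D^2}{\e^2}\bigl(1 - 2\log\tfrac32 - \log^2\tfrac32\bigr)$. For $\lowb$ I reuse the facts already established in the proof of Theorem~\ref{thm:tight_l1}, namely $b^n \to \tfrac23$, $n = \Theta(\tfrac{\log(3/2)}{\delta})$, and $a = \Theta(\tfrac32\delta)$. Since $b \to 1$ and $n \to \infty$ with $\e n \to c := \log\tfrac32$, I approximate $\sum_{k=0}^{n-1} b^k k^2 = \sum_{k=0}^{n-1} e^{-\e k} k^2$ by $\frac{1}{\e^3}\int_0^{c} e^{-s} s^2\,ds = \frac{1}{\e^3}\cdot\frac23(1 - 2c - c^2)$, whence $\lowb \to \frac{2\D^2}{\delta^2}(1 - 2\log\tfrac32 - \log^2\tfrac32)$. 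This matches the leading constant of $\uppb$, so the ratio tends to $1$, which is precisely the constant gap from \citet{GV_IT_Approximate} now closed.

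The main obstacle is the asymptotic bookkeeping in the third regime. One has to (i) justify replacing the sum $\sum_{k} b^k k^2$ by the integral to leading order in the coupled limit $b \to 1$, $n \to \infty$ with $\e n$ bounded, which a Riemann-sum / Euler--Maclaurin estimate handles since the discretization error is of lower order in $\e$; and (ii) confirm that the rounding corrections dropped in Theorem~\ref{thm:noise-power-lower-bound} (the $4kN$ and the $1$ inside $(2kN+1)^2$, and the $\tfrac{\D}{2N}$-type terms) do not perturb the leading constant, so that the identical factor $1 - 2\log\tfrac32 - \log^2\tfrac32$ emerges in both bounds and the ratio is exactly $1$. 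The first two regimes reduce to routine Taylor expansions, once the finite-sum-versus-series distinction in $\sum b^k k^2$ is handled carefully.
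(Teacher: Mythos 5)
Your proposal is correct and takes essentially the same route as the paper: regime-by-regime leading-order asymptotics of $\uppb$ and $\lowb$ separately, using the identical limiting quantities ($a \to \delta$, $n \to \frac{1}{2\delta}$ as $\e \to 0$; $a \to \frac{1-b}{2}$, $b^n, n^2 b^n \to 0$ as $\delta \to 0$; $b^n \to \frac23$, $\e n \to \log\frac32$, $a \sim \frac32\e$ when $\e = \delta \to 0$) and arriving at the same limiting constants in every case. The only cosmetic difference is in the third regime, where you evaluate $\sum_k e^{-\e k}k^2$ via the integral $\frac{1}{\e^3}\int_0^{\log(3/2)} e^{-s}s^2\,ds$ while the paper substitutes the asymptotics directly into the closed form \eqref{eq:l2-lower-bound}; both yield $\frac{2\D^2}{\e^2}\bigl(1 - 2\log\frac32 - \log^2\frac32\bigr)$, matching the upper bound.
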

  
\begin{proof}
\begin{itemize}
    \item Case $\e \to 0$: When $\e \to 0$, the upper bound $\uppb$ converges to $\frac{\D^2}{12\delta^2}$. For the lower bound, when $\e \to 0$, we have 
    \begin{align*}
      a &\to \delta, \\
      b &\to 1, \\
      n &\to \frac{1}{2\delta},
    \end{align*}
    and thus the lower bound  $ \lowb = 2 a \sum_{k=0}^{n-1} b^k k^2 \D^2$ converges to 
    \[2 \delta \frac{(n-1)n(2n-1)}{6} \D^2 = 2 \delta \frac{ (\frac{1}{2\delta}-1)\frac{1}{2\delta}(\frac{1}{\delta}-1)}{6} \D^2 = \frac{\D^2}{12} (\frac{1}{\delta} - 1) (\frac{1}{\delta} - 2),\] which matches the upper bound as $\delta \to 0$. Therefore, 
    \begin{align*}
      \lim_{\e \to 0} \frac{\lowb}{\uppb} \ge \frac{ \frac{\D^2}{12} (\frac{1}{\delta} - 1) (\frac{1}{\delta} - 2)}{\frac{\D^2}{12\delta}} =  (1-\delta)(1 - 2\delta).
  \end{align*}

  \item Case $\delta \to 0$: When $\delta \to 0$, the upper bound $\uppb$ converges to $\frac{2\D^2}{\e^2}$. For the lower bound, we have
  \begin{align*}
    a & \to \frac{1- e^{-\e}}{2} = \frac{1-b}{2} \\
    b^n &\to 0 \\
    n^2b^n &\to 0,
  \end{align*}
  and thus the lower bound  $ 2a  \frac{-b + 2 ( \frac{b(1-b^{n-1})}{(1-b)^2} - \frac{(n-1) b^{n}}{1-b} ) - \frac{b^2(1-b^{n-2})}{1-b} - (n-1)^2 b^{n}}{1-b}  \D^2$ converges to 
  \begin{align*}
   (- b + 2 \frac{b}{(1-b)^2} - \frac{b^2}{1-b} ) \D^2 = \frac{b^2 + b}{(1-b)^2}\D^2 =  \frac{e^{-2\e} + e^{-\e}}{(1-e^{-\e})^2}\D^2 =  \frac{1 + e^{\e}}{(e^\e - 1)^2}\D^2,
  \end{align*}
  and this matches $\frac{2\D^2}{\e^2}$ as $\e \to 0$. Therefore,
  \begin{align*}
  \lim_{\delta \to 0} \frac{\lowb}{\uppb} \ge \frac{ \frac{1 + e^{\e}}{(e^\e - 1)^2}\D^2}{\frac{2\D^2}{\e^2}} =  \frac{\e^ 2(1 + e^{\e})}{2(e^\e - 1)^2}.
\end{align*}

\item Case $\epsilon = \delta \to 0$:  The upper bound $\uppb$ converges to $\frac{2\D^2}{\e^2} (1 - \log^2{\frac{3}{2}} - 2\log{\frac{3}{2}})$.
 When $\epsilon = \delta \to 0$,
the lower bound $\lowb$ is
  \begin{align*}
    & 2a  \frac{-b + 2 ( \frac{b(1-b^{n-1})}{(1-b)^2} - \frac{(n-1) b^{n}}{1-b} ) - \frac{b^2(1-b^{n-2})}{1-b} - (n-1)^2 b^{n}}{1-b}  \D^2 \\
    &\approx 2 \frac{3}{2}\e \frac{  2(\frac{1-\frac{2}{3}}{\e^2} - \frac{\frac{2}{3}\log\frac{3}{2}}{\e^2} )  -  \frac{1}{3\e}  - \frac{2}{3} \frac{\log^2\frac{3}{2}}{\e^2} }{\e} \D^2  \\
    &= \frac{3\D^2}{\e^2} (\frac{2}{3} - \frac{4}{3} \log\frac{3}{2}  -   \frac{2}{3} \log^2\frac{3}{2})\\
     &= \frac{2\D^2}{\e^2} (1 - 2 \log\frac{3}{2}  -  \log^2\frac{3}{2} ),
  \end{align*}
which matches the uppber bound. We conclude that
  \begin{align*}
    \lim_{\e = \delta \to 0} \frac{\lowb}{\uppb} = 1.
  \end{align*}
\end{itemize}
\end{proof}

\section{Comparison with the Optimal Gaussian Mechanism} \label{sec:comparison}
In this section we conduct numeric experiments to compare the performance of the truncated Laplacian mechanisms with the optimal Gaussian mechanism described in \citet{icmlGaussian}.

A classic result on the Gaussian mechanism is that for any $\e, \delta \in (0,1)$, adding a Gaussian noise with standard deviation $\sigma = \frac{\sqrt{2 \log (1.25/\delta)}}{\e}\D$ preserves $(\e, \delta)$-differential privacy \cite{DPbook}.  \citet{icmlGaussian} developed the optimal Gaussian mechanism whose variance is calibrated directly using the Gaussian cumulative density function instead of a tail bound approximation.

\begin{figure}[H]
    \centering
    \includegraphics[width=0.45\textwidth]{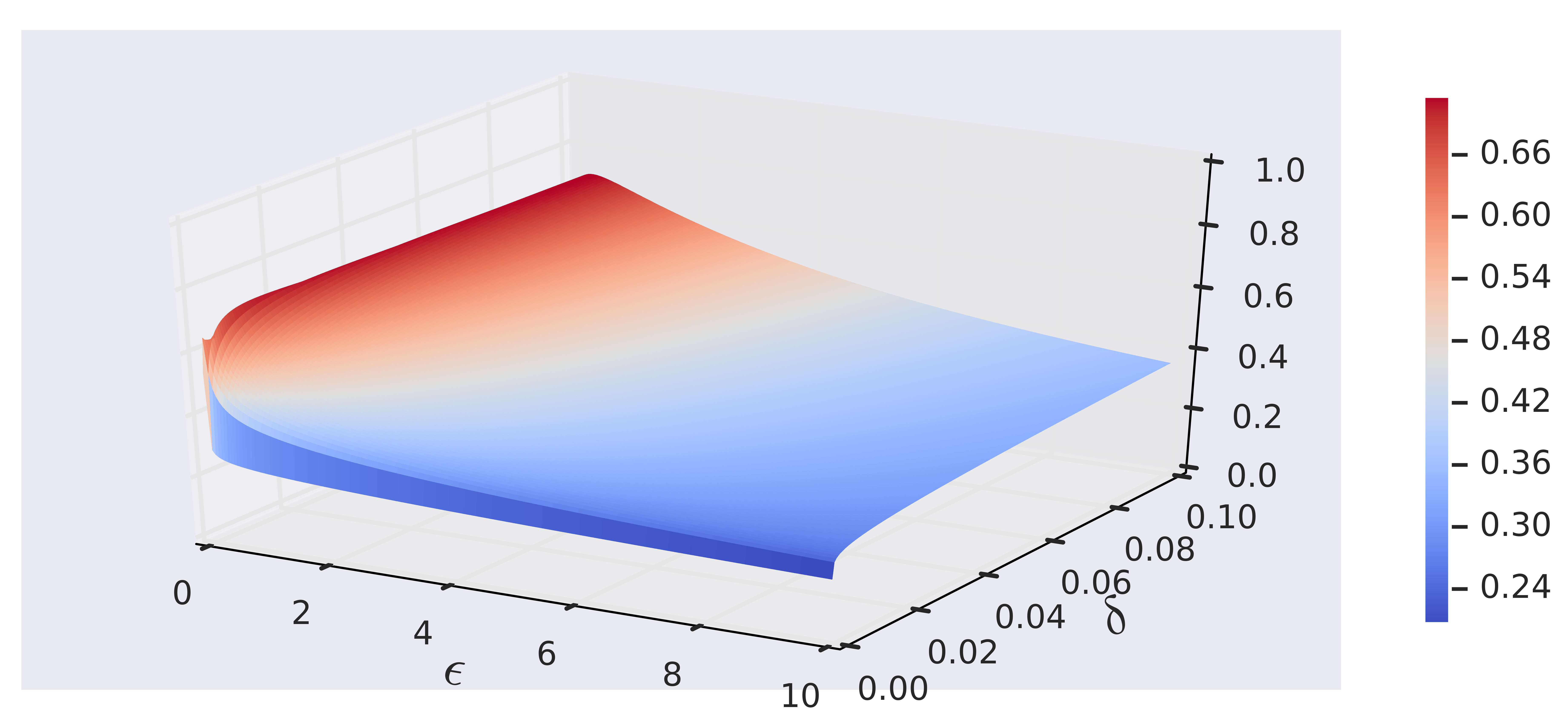}
    \caption{Ratio of the Noise Amplitude of the Truncated Laplacian Mechanism and the Optimal Gaussian Mechanism.}
    \label{fig:ratio_l1}
\end{figure}

We plot the ratio of the noise amplitude of truncated Laplacian mechanism and the optimal Gaussian mechanism in Fig.~\ref{fig:ratio_l1}, and plot the ratio of the noise power of truncated Laplacian mechanism and the optimal Gaussian mechanism in Fig.~\ref{fig:ratio_l2}, where $\e \in [10^{-4}, 10]$ and $\delta \in [10^{-6}, 0.1]$. Note that compared with the optimal Gaussian mechanism, the truncated Laplacian mechanism significantly reduces the noise amplitude and noise power in all privacy regimes. The improvement is not very surprising, as the truncated Laplacian mechanism universally improves the probability density decay rate (for both small and big noises) and thus leads to smaller noise amplitude and noise power in expectation.

\begin{figure}[H]
    \centering
    \includegraphics[width=0.45\textwidth]{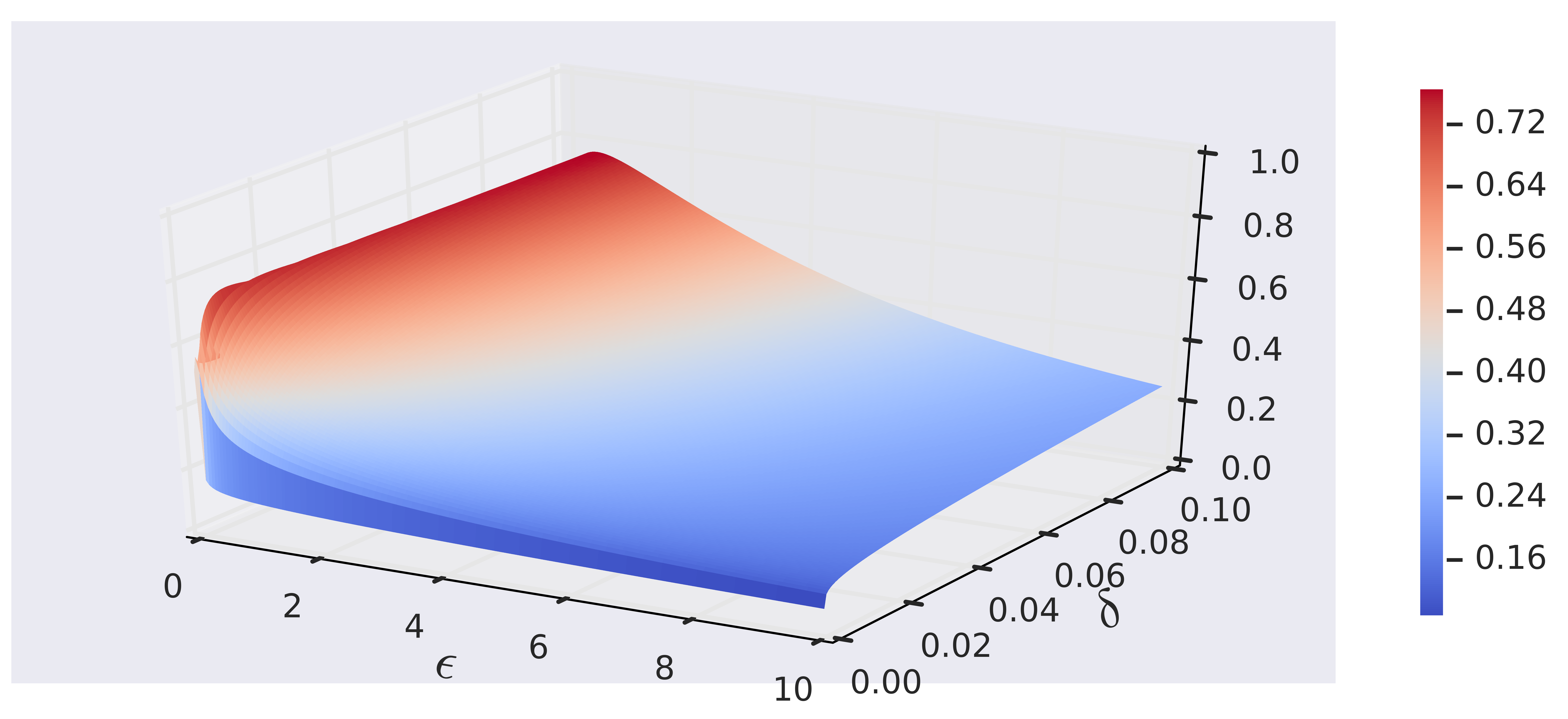}
    \caption{Ratio of the Noise Power of the Truncated Laplacian Mechanism and the Optimal Gaussian Mechanism.}
    \label{fig:ratio_l2}
\end{figure}

\section{Conclusion and Discussion} \label{sec:conclusion}
In this work, we characterize the minimum noise amplitude and noise power for noise-adding mechanisms in $(\e, \delta)$-differential privacy for single real-valued query function. We derive new lower bounds using the duality of linear programming, and derive new upper bounds by proposing a new class of $(\e,\delta)$-differentially private mechanisms, the \emph{truncated Laplacian} mechanisms. We show that the multiplicative gap of the lower bounds and upper bounds goes to zero in various high privacy regimes, proving the tightness of the lower and upper bounds and thus establishing the optimality of the truncated Laplacian mechanism. In particular, our results close the previous constant multiplicative gap in \citet{GV_IT_Approximate}. Comprehensive numeric experiments show the improvement of the truncated Laplacian mechanism over the optimal Gaussian mechanism in \citet{icmlGaussian} in all privacy regimes.

An obvious question is how to further improve the truncated Laplacian mechanism to provide stronger privacy guarantees. 
To minimize the additive noise, an important property of the truncated Laplacian mechanism is that the range of the output noise is bounded between $[-A, A]$. Therefore, for two neighboring datasets, the randomized output ranges will have some non-overlapped set. While the truncated Laplacian mechanism can strictly preserve $(\e, \delta)$-differential privacy, with a small probability up to $\delta$ (corresponding to the probability that the output is in the non-overlapped set), an adversary can distinguish the two neighboring datasets. To address this concern, one can improve over the truncated Laplacian mechanism and impose an arbitrarily light tail distribution over $[A, +\infty)$ to ensure that the output space is the same for all possible datasets. 

\bibliography{reference}
\bibliographystyle{icml2019}

\end{document}